\documentclass[journal]{IEEEtran}
\usepackage[T1]{fontenc}
\usepackage[utf8]{inputenc}
\usepackage{xcolor}
\usepackage{prettyref}
\usepackage{float}
\usepackage{dsfont}
\usepackage{amsmath}
\usepackage{amsthm}
\usepackage{amssymb}
\usepackage{stackrel}
\usepackage{graphicx}
\usepackage[bookmarks=true,bookmarksnumbered=true,bookmarksopen=true,bookmarksopenlevel=1,
 breaklinks=false,pdfborder={0 0 0},pdfborderstyle={},backref=false,colorlinks=false]
 {hyperref}
\hypersetup{pdftitle={Your Title},
 pdfauthor={Your Name},
 pdfpagelayout=OneColumn, pdfnewwindow=true, pdfstartview=XYZ, plainpages=false}

\makeatletter

\newcommand*\LyXZeroWidthSpace{\hspace{0pt}}
\floatstyle{ruled}
\newfloat{algorithm}{tbp}{loa}
\providecommand{\algorithmname}{Algorithm}
\floatname{algorithm}{\protect\algorithmname}

\ifx\proof\undefined\
  \newenvironment{proof}[1][\proofname]{\par
    \normalfont\topsep6\p@\@plus6\p@\relax
    \trivlist
    \itemindent\parindent
    \item[\hskip\labelsep
          \scshape
      #1]\ignorespaces
  }{%
    \endtrivlist\@endpefalse
  }
  \providecommand{\proofname}{Proof}
\fi

\usepackage[caption=false,font=footnotesize]{subfig}
\usepackage{tikz}
\usepackage{amsthm}

\usetikzlibrary{arrows}
\usetikzlibrary{shadows}

\usetikzlibrary{shapes}

\usetikzlibrary{arrows,decorations.pathmorphing}

\tikzset{
  every overlay node/.style={
    draw=white,anchor=north west,
  },
}
\usetikzlibrary{positioning, decorations.pathmorphing}

\usetikzlibrary{fit,positioning,hobby}

\usepackage{pdfsync}

\usepackage{algorithm,algpseudocode}

\makeatother

\theoremstyle{plain}
\newtheorem{lem}{\protect\lemmaname}
\theoremstyle{remark}
\newtheorem{rem}{\protect\remarkname}
\theoremstyle{definition}
\newtheorem{defn}{\protect\definitionname}
\newtheorem{example}{\protect\examplename}
\theoremstyle{plain}
\newtheorem{thm}{\protect\theoremname}
\newtheorem{cor}{\protect\corollaryname}
\providecommand{\corollaryname}{Corollary}
\providecommand{\definitionname}{Definition}
\providecommand{\examplename}{Example}
\providecommand{\lemmaname}{Lemma}
\providecommand{\remarkname}{Remark}
\providecommand{\theoremname}{Theorem}

\begin{document}
\title{Subspace Consensus of Matrix-Weighted Networks}
\author{Yuhao Chen, Lulu Pan, Xiaohui Gong, Peng Wang, Haibin~Shao,~\IEEEmembership{Member,~IEEE}
\thanks{The authors are with the School of Automation and Intelligent Sensing,
Shanghai Jiao Tong University, Shanghai 200240, China.}}
\maketitle
\begin{abstract}
This paper investigates the subspace consensus problem of matrix-weighted
multi-agent networks, where each agent possesses a vector-valued state
in $\mathbb{R}^{d}$ and interactions between neighboring agents are
characterized by matrix-valued edge weights. Besides all dimensions
of the agent states achieve full-state consensus, many practical applications
appeal that agents are required to agree only on certain dimensions
while maintaining desired relative configurations in the remaining
ones. To address this gap, we introduce the concept of subspace consensus.
A matrix-weighted network is said to achieve subspace consensus on
a subspace $\mathbb{V}\subseteq\mathbb{R}^{d}$ if the projection
of the agents' state differences onto $\mathbb{V}$ asymptotically
converges to zero. This definition renders the traditional consensus
as a special case when $\mathbb{V}=\mathbb{R}^{d}$. From an algebraic
perspective, we derive necessary and sufficient conditions for subspace
consensus by analyzing the interplay between the null spaces of edge
weights. From a topological perspective, we present sufficient conditions
characterized by $\mathbb{V}$-connectivity and the existence of a
$\mathbb{V}$-spanning tree, as well as necessary conditions based
on graph cuts. Furthermore, we provide refined necessary and sufficient
conditions specifically for tree networks. This work uncovers a fundamental
capability inherent to matrix-weighted networks and establishes a
systematic framework for analyzing agreement behaviors on prescribed
subspaces.
\end{abstract}

\begin{IEEEkeywords}
Subspace consensus, matrix-weighted networks, subspace connectivity,
subspace spanning tree.
\end{IEEEkeywords}

\IEEEpeerreviewmaketitle{}

\section{Introduction}

Over the past two decades, the paradigm of distributed multi-agent
coordination has emerged as a cornerstone in the study of networked
systems, underpinning a wide range of applications including distributed
estimation, control, optimization, and learning over networks \cite{kia2019tutorial,mesbahi2010graph}.
At the heart of this paradigm lies the consensus problem, wherein
a group of agents, each equipped with a local state, interact through
a communication network to asymptotically agree on a common value
\cite{olfati2004consensus,jadbabaie2002coordination}. The classical
consensus framework, however, has long operated under the assumption
that inter-agent interactions are captured by scalar-valued edge weights,
thereby overlooking the potential complexity of interactions when
agent states are vector-valued.

In many applications, the state of each agent naturally resides in
a higher-dimensional space. For instance, in formation control, an
agent's state may encode its position and orientation \cite{zhao2015translational};
in opinion dynamics, it may represent stances on multiple topics \cite{ye2020continuous};
in sensor networks, it may capture multidimensional measurements \cite{barooah2008estimation}.
In such settings, interactions between agents are not merely scalar-weighted
but can involve intricate couplings across different dimensions of
the state vectors. This observation has motivated a growing body of
research on matrix-weighted multi-agent networks, where each edge
is endowed with a matrix-valued weight that modulates the influence
between agents across the dimensions of their states \cite{trinh2018matrix,pan2025privacy,wang2022characterizing,miao2022matrix,pan2022cluster}.
Such matrix-weighted couplings naturally arise in scenarios such as
generalized effective resistance in electrical networks \cite{barooah2008estimation},
multi-topic opinion dynamics \cite{ye2020continuous}, bearing-based
distributed formation control \cite{zhao2015translational}, and the
dynamics of arrays of coupled oscillators \cite{tuna2019synchronization}.

However, most existing results on matrix-weighted networks focus on
a particular type of collective behavior, namely full-state consensus,
in which all components of the agents’ state vectors converge to a
common value. These formulations, while important, do not fully capture
the richness of behaviors that matrix-weighted interactions can engender.
In particular, there are many scenarios in which full-dimensional
consensus is either unnecessary or even undesirable. Consider, for
example, a multi-agent system tasked with achieving a specific formation
in a subset of dimensions while maintaining agreement in the remaining
ones. In bearing-based formation control, agents often use orthogonal
projection matrices as edge weights, thereby restricting interactions
to a subspace. The protocol then drives the state differences only
within that subspace, leaving components orthogonal to it unaffected
\cite{zhao2015translational}. Such systems do not converge to a single
common vector; instead, they reach a configuration in which some dimensions
are aligned while others maintain prescribed relative offsets. This
kind of behavior lies beyond the reach of traditional consensus analysis.

Remarkably, the matrix-weighted coupling mechanism inherently supports
such dimension-specific agreement patterns. The protocol drives to
zero only the projection of the state difference onto the row space
of the weight matrix, leaving components in its null space untouched.
This fundamental property suggests that by carefully designing the
edge weight matrices (in particular, their row spaces) one can engineer
systems in which agreement is enforced only along prescribed directions,
while allowing flexibility or even prescribed structures in the orthogonal
complement. Yet, despite its conceptual appeal and practical relevance,
this inherent capability has remained largely unnoticed in the literature.

This paper aims to fill this gap by introducing the concept of subspace
consensus: A matrix-weighted network is said to achieve subspace consensus
on $\mathbb{V}\subseteq\mathbb{R}^{d}$ if, for every pair of agents,
the projection of their state difference onto $\mathbb{V}$ asymptotically
vanishes. We provide a comprehensive analysis of the subspace consensus
problem from both algebraic and topological perspectives. From an
algebraic standpoint, we derive necessary and sufficient conditions
for subspace consensus by examining the structure of the matrix-weighted
Laplacian and its null space. We show that subspace consensus on $\mathbb{V}$
is intimately related to the alignment of the row spaces of edge weights
along paths in the network. From a topological perspective, we introduce
the concepts of $\mathbb{V}$-connectivity and $\mathbb{V}$-spanning
trees, which characterize the network structure required to propagate
agreement across the subspace. We establish that the existence of
a $\mathbb{V}$-spanning tree is a sufficient condition for subspace
consensus, and we also derive necessary conditions based on graph
cuts. Moreover, we specialize our analysis to tree networks, where
we obtain several necessary and sufficient conditions that reveal
the interplay between the null spaces of edges and the agreement subspace.

The remainder of the paper is organized as follows. $\mathsection$\ref{sec:problem-formulation}
and $\mathsection$\ref{sec:Motivation-and-Problem} introduce the
necessary preliminaries and formulate the subspace consensus problem.
Main theorems and numerical results are presented in $\mathsection$\ref{sec:Subspace-Consensus}.
We finally provide concluding remarks in $\mathsection$\ref{sec:Conclusion}.

\section{Notation and Preliminaries\label{sec:problem-formulation}}

\subsection{Notation}

We first introduce the notation. Let $\mathbb{R}$, $\mathbb{N}$,
and $\mathbb{Z}_{+}$ be the set of real numbers, natural numbers,
and positive integers, respectively. For $n\in\mathbb{Z}_{+}$, denote
$\underline{n}=\left\{ 1,2,\ldots,n\right\} $. We use $M\succ0$
(respectively, $M\succeq0$) to denote that a symmetric matrix $M$
is positive definite (respectively, positive semi-definite).\textcolor{olive}{{}
}The null space, row space and range space of a matrix $M$ are denoted
by $\text{{\bf null}}(M)$, $\text{{\bf row}}(M)$ and $\text{{\bf range}}(M)$,
respectively. Let $\mathds{1}_{n}$, ${\bf 0}_{n\times n}$ and $I_{n}$
designate the $n$-dimensional column vector whose components are
all $1$'s, the $n\times n$ matrix whose components are all $0$'s,
and the $n\times n$ identity matrix, respectively. The $i$-th entry
of a vector $\boldsymbol{x}\in\mathbb{R}^{d}$ is denoted by $[\boldsymbol{x}]_{i}$.
For a vector space $\mathbb{W}$ and its subspace $\mathbb{V}$, we
write $\mathbb{V}\subseteq W$. The orthogonal complement of a subspace
$\mathbb{V}$ is denoted by $\mathbb{V}^{\perp}$. The intersection
and the sum of two subspaces $\mathbb{U}$ and $\mathbb{V}$ are denoted
by ${\displaystyle \mathbb{U}\cap\mathbb{V}}$ and ${\displaystyle \mathbb{U}+\mathbb{V}}$,
respectively. For two orthogonal subspaces $\mathbb{U}$ and $\mathbb{V},$
we write ${\displaystyle \mathbb{U}\perp\mathbb{V}}.$ For a subspace
$\mathbb{V}$, matrix $\text{P}_{\mathbb{V}}:\mathbb{R}^{d}\rightarrow\mathbb{V}$
denotes the orthogonal projection onto $\mathbb{V}$. Let $\boldsymbol{e}_{i}\in\mathbb{R}^{d}$
denote the vector whose $i$-th entry is $1$, and all other entries
are $0$.  We use $\bigoplus$ to represent the direct-sum of subspaces.

\subsection{Matrix-Weighted Networks}

Consider an\textcolor{black}{{} undirected} matrix-weighted network
$\mathcal{G}=(\mathcal{V},\mathcal{E},A)$ on $n>1$ ($n\in\mathbb{Z}_{+}$)
nodes. The node and edge sets of $\mathcal{G}$ are denoted by $\mathcal{V}=\left\{ 1,2,\ldots,n\right\} $
and $\mathcal{E}\subseteq\mathcal{V}\times\mathcal{V}$, respectively.
Each edge $(i,j)\in\mathcal{E}$ is assigned a matrix-valued weight
encoded by a matrix $A_{ij}\in\mathbb{R}^{d\times d}$ such that $A_{ij}\not=\boldsymbol{0}_{d\times d}$
if $(i,j)\in\mathcal{E}$, and $A_{ij}=\boldsymbol{0}_{d\times d}$
otherwise. Thereby, the matrix-valued adjacency matrix $A=(A_{ij})\in\mathbb{R}^{dn\times dn}$
is a block matrix such that the block located in its $i$-th row and
the $j$-th column is $A_{ij}$. An edge $(i,j)$ is positive definite
(semi-definite) if $A_{ij}$ is positive definite (semi-definite).
A positive tree of $\mathcal{G}$ is a tree such that every edge in
this tree is positive definite. A positive spanning tree of $\mathcal{G}$
is a positive tree containing all nodes in $\mathcal{G}$. This paper
assumes that all non-zero matrix-valued weights are either positive
definite or positive semi-definite. Let $\mathcal{N}_{i}=\left\{ j\in\mathcal{V}\,|\,(i,j)\in\mathcal{E}\right\} $
denote the neighbor set of agent $i$. The matrix-valued Laplacian
matrix of $\mathcal{G}$ is defined as $L=D-A$ where $D=\text{{\bf diag}}\left\{ D_{1},\cdots,D_{n}\right\} \in\mathbb{R}^{dn\times dn}$
and $D_{i}=\underset{j\in\mathcal{N}_{i}}{\sum}A_{ij}\in\mathbb{R}^{d\times d}$. 

\section{Motivation and Problem Formulation\label{sec:Motivation-and-Problem}}

Consider a matrix-weighted multi-agent network $\mathcal{G}=(\mathcal{V},\mathcal{E},A)$
consisting of $n$ agents. The state of each agent $i\in\mathcal{V}$
is denoted by $\boldsymbol{x}_{i}\in\mathbb{R}^{d}\thinspace(d\in\mathbb{Z}_{+})$
which evolves according to the interaction protocol
\begin{equation}
\dot{\boldsymbol{x}}_{i}={\displaystyle \sum_{j\in\mathcal{N}_{i}}}A_{ij}\left(\boldsymbol{x}_{j}-\boldsymbol{x}_{i}\right),i\in\mathcal{V}.\label{eq:protocol}
\end{equation}
The overall dynamics of \eqref{eq:protocol} can be dictated by 
\begin{equation}
\dot{\boldsymbol{x}}=-L\boldsymbol{x}\label{eq:overall}
\end{equation}
where $\boldsymbol{x}=(\boldsymbol{x}^{\top}_{1},\boldsymbol{x}^{\top}_{2},\ldots\boldsymbol{x}^{\top}_{n})^{\top}$.

Note that the protocol \eqref{eq:protocol} extends the previous scalar-scaled
consensus protocol by introducing a matrix-valued weight matrix $A_{ij}$.
Therefore, the nullspace of the matrix-weighted Laplacian expands
and the collective behavior of \eqref{eq:overall} becomes intricate. 
\begin{lem}
\cite{trinh2018matrix,pan2018bipartite} \label{lem:null-space}Let
$\mathcal{G}=(\mathcal{V},\mathcal{E},A)$ be a matrix-weighted network.
Then the associated matrix-valued Laplacian matrix $L$ of $\mathcal{G}$
is positive semi-definite, and the structure of its null space can
be characterized by $\text{{\bf null}}(L)=\text{{\bf span}}\left\{ \mathcal{R},\mathcal{H}\right\} ,$
where 
\begin{equation}
\mathcal{R}=\text{{\bf range}}\{\mathds{1}_{n}\otimes I_{d}\},\label{eq:consensus-space}
\end{equation}
and 
\begin{align}
\mathcal{H}=\{\boldsymbol{v} & =(\boldsymbol{v}^{\top}_{1},\boldsymbol{v}^{\top}_{2},\cdots,\boldsymbol{v}^{\top}_{n})^{\top}\in\mathbb{R}^{dn}\mid\nonumber \\
 & (\boldsymbol{v}_{i}-\boldsymbol{v}_{j})\in\text{{\bf null}}(A_{ij}),\,(i,j)\in\mathcal{E}\}.\label{eq:edge-space}
\end{align}
\end{lem}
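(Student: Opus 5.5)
The plan is to write the quadratic form of $L$ as a sum over edges, and to read both positive semi-definiteness and the null space off that sum. Throughout I use that the weights are symmetric, $A_{ij}=A_{ji}\succeq0$, as the paper assumes for an undirected network.

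\textbf{Step 1: edge-sum identity.} For any $\boldsymbol{x}=(\boldsymbol{x}_1^{\top},\ldots,\boldsymbol{x}_n^{\top})^{\top}\in\mathbb{R}^{dn}$, expand $\boldsymbol{x}^{\top}L\boldsymbol{x}=\sum_i \boldsymbol{x}_i^{\top}D_i\boldsymbol{x}_i-\sum_{i,j}\boldsymbol{x}_i^{\top}A_{ij}\boldsymbol{x}_j$. Grouping the two ordered terms belonging to each undirected edge and using $A_{ij}=A_{ji}$ gives
\begin{equation*}
\boldsymbol{x}^{\top}L\boldsymbol{x}=\sum_{(i,j)\in\mathcal{E}}(\boldsymbol{x}_i-\boldsymbol{x}_j)^{\top}A_{ij}(\boldsymbol{x}_i-\boldsymbol{x}_j),
\end{equation*}
where each undirected edge is counted once. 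Every summand is nonnegative because $A_{ij}\succeq0$. Hence $L\succeq0$; note also that $L$ is symmetric, since $D$ is block diagonal with symmetric blocks and $A$ has symmetric blocks with $A_{ij}=A_{ji}$.

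\textbf{Step 2: null space.} Because $L$ is symmetric and positive semi-definite, $L\boldsymbol{x}=0$ holds if and only if $\boldsymbol{x}^{\top}L\boldsymbol{x}=0$. (Write $L=L^{1/2}L^{1/2}$; then $\boldsymbol{x}^{\top}L\boldsymbol{x}=\|L^{1/2}\boldsymbol{x}\|^2$.) By Step 1, the sum vanishes exactly when every summand vanishes. For $A_{ij}\succeq0$, the condition $\boldsymbol{y}^{\top}A_{ij}\boldsymbol{y}=0$ is equivalent to $A_{ij}^{1/2}\boldsymbol{y}=0$, which is equivalent to $A_{ij}\boldsymbol{y}=0$. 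Therefore $\text{{\bf null}}(L)$ is exactly the set of $\boldsymbol{v}$ with $\boldsymbol{v}_i-\boldsymbol{v}_j\in\text{{\bf null}}(A_{ij})$ for all $(i,j)\in\mathcal{E}$. This set is precisely $\mathcal{H}$, so $\text{{\bf null}}(L)=\mathcal{H}$.

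\textbf{Step 3: reconcile with the stated form.} The subspace $\mathcal{R}=\text{{\bf range}}\{\mathds{1}_n\otimes I_d\}$ consists of vectors with $\boldsymbol{v}_i=\boldsymbol{v}_j$ for all $i,j$, so every edge difference is $0\in\text{{\bf null}}(A_{ij})$. Hence $\mathcal{R}\subseteq\mathcal{H}$. Since $\mathcal{H}$ is a linear subspace, $\text{{\bf span}}\{\mathcal{R},\mathcal{H}\}=\mathcal{H}=\text{{\bf null}}(L)$, which is the claimed characterization. The lemma lists $\mathcal{R}$ separately only to display the consensus subspace, which is always contained in the kernel.

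The only delicate step is the bookkeeping in Step 1: the double sum over ordered pairs must collapse to a single sum over undirected edges. This depends on the symmetry $A_{ij}=A_{ji}$, so I would state that convention explicitly at the start. The remaining steps are standard facts about positive semi-definite matrices.
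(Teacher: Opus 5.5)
Your proof is correct and follows essentially the same route as the paper. The paper does not prove this lemma itself (it quotes it from the cited references), but the argument those sources use, and which the paper reproduces in the proof of Theorem~\ref{thm: sufficient condition}, is exactly your edge-sum identity $\boldsymbol{x}^{\top}L\boldsymbol{x}=\sum_{(i,j)\in\mathcal{E}}(\boldsymbol{x}_i-\boldsymbol{x}_j)^{\top}A_{ij}(\boldsymbol{x}_i-\boldsymbol{x}_j)$, combined with the fact that the quadratic form of a positive semi-definite matrix vanishes only on its kernel. Your remark that $\mathcal{R}\subseteq\mathcal{H}$, so $\text{{\bf null}}(L)=\mathcal{H}$, is a harmless sharpening that matches how the paper later uses the lemma ($\Omega=\text{{\bf null}}(L)$).
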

\begin{rem}
Note from Lemma \ref{lem:null-space} that the null space of the matrix-weighted
Laplacian is, from a macroscopic perspective, inevitably influenced
by the null space of matrix-valued edge weights $\text{{\bf null}}(A_{ij})$.
If matrix-valued edge weights in $\mathcal{G}$ are all positive definite,
i.e., $\text{{\bf null}}(A_{ij})=\left\{ 0\right\} ,\,\forall(i,j)\in\mathcal{E},$
then $\text{{\bf null}}(L)=\mathcal{R}.$ This implies all agents'
states will asymptotically converge to the same value. However, if
there exist positive semi-definite edges in $\mathcal{G}$, then the
null space of $\text{{\bf null}}(L)$ can be enlarged beyond $\mathcal{R}$
due to the nontriviality of the positive semi-definite edges. This
fact motivates us to examine the algebraic properties of $A_{ij}$
and their influence on the steady-state of the matrix-weighted network
\eqref{eq:overall}.
\end{rem}
\textcolor{red}{}Specifically, we shall examine, from a microscopic
perspective, the influence of a single matrix-valued weight on the
state difference of neighboring agents in protocol \eqref{eq:protocol}.
One can decompose the relative state difference $\Delta\boldsymbol{x}_{ij}=\boldsymbol{x}_{j}-\boldsymbol{x}_{i}$
into two orthogonal components with respect to the null space and
row space of matrix $A_{ij}$. Specifically, since $\mathbb{R}^{d}=\text{{\bf null}}(A_{ij})\bigoplus\text{{\bf row}}(A_{ij})$,
we can write 
\[
\Delta\boldsymbol{x}_{ij}=\text{P}_{\text{{\bf null}}(A_{ij})}(\Delta\boldsymbol{x}_{ij})+\text{P}_{\text{{\bf row}}(A_{ij})}(\Delta\boldsymbol{x}_{ij}).
\]
Because any vector in $\text{{\bf null}}(A_{ij})$ is annihilated
by $A_{ij},$ the term $A_{ij}\text{P}_{\text{{\bf null}}(A_{ij})}(\Delta\boldsymbol{x}_{ij})$
equals zero. Consequently, 
\[
A_{ij}\Delta\boldsymbol{x}_{ij}=A_{ij}\mathbf{\text{P}}_{\text{{\bf row}}(A_{ij})}(\Delta\boldsymbol{x}_{ij}).
\]
This implies that the interaction along edge $(i,j)$ is insensitive
to the component of $\Delta\boldsymbol{x}_{ij}$ lying in $\text{{\bf null}}(A_{ij}).$
In contrast, the component projected onto $\text{{\bf row}}(A_{ij})$
is actively affected by the dynamics, and the consensus protocol will
gradually drive this row-space component to zero over time.
\begin{rem}
For scalar-weighted networks, $A_{ij}=a_{ij}I_{d},$ which implies
that $\text{{\bf row}}(A_{ij})=\mathbb{R}^{d}$ and $\text{{\bf null}}(A_{ij})=\left\{ 0\right\} $.
Therefore, the \eqref{eq:protocol} drives the projection of $\Delta\boldsymbol{x}_{ij}$
onto $\text{{\bf row}}(A_{ij})$, i.e., $\Delta\boldsymbol{x}_{ij}$,
to gradually decay to zero. This implies that any two adjacent nodes
converge to the same value, thereby achieving consensus.
\end{rem}
From a theoretical perspective, the classical consensus protocol is
defined in the whole state space $\mathbb{R}^{d}$, where all agents
asymptotically agree on a common vector in $\mathbb{R}^{d}$. This
naturally raises the following question: what happens if agreement
is required only on a subspace of $\mathbb{R}^{d}$? From a practical
perspective, many matrix-weighted networks inherently exhibit such
a structure. For instance, bearing-based algorithms typically use
orthogonal projection matrices as edge weights, thereby restricting
interactions to specific subspaces. In these cases, the protocol only
regulates state differences within the projected subspace, while components
in its orthogonal complement remain unaffected. This observation motivates
the concept of subspace consensus, in which agents are required to
asymptotically agree only on a prescribed subspace rather than the
entire space.
\begin{figure}[tbh]
\centering{}\begin{tikzpicture}[scale=0.5]
    \node (n4) at (-1,3) [circle,fill=black!20,draw] {4};
    \node (n5) at (-1,0) [circle,fill=black!20,draw] {5};
    \node (n2) at (2,0) [circle,fill=black!20,draw] {2};
	\node (n1) at (2,3) [circle,fill=black!20,draw] {1};
    \node (n3) at (5,0) [circle,fill=black!20,draw] {3};
	\node (n6) at (5,3) [circle,fill=black!20,draw] {6};

	\draw [-,  thick, color=black!80] (n1) -- (n2); 
	\draw [-,  thick, color=black!80] (n1) -- (n3); 
    \draw [-,  thick, color=black!80] (n2) -- (n3); 

	\draw [-,  thick, color=black!80] (n5) -- (n4); 

    \draw [-,dashed,  thick, color=black!80] (n1) -- (n4); 
    \draw [-,dashed,  thick, color=black!80] (n2) -- (n5); 

    \draw [-,dashed,  thick, color=black!80] (n1) -- (n6); 
\end{tikzpicture}\caption{\label{fig:The-network-structure of exm for def}The network structure
in Example \ref{exa1}. Dashed edges represent positive semi-definite
edges, while solid edges represent positive definite edges.}
\end{figure}

\begin{defn}
\textcolor{violet}{\label{def:subspace consensus} }\textcolor{black}{Let
$\mathbb{V}$ be a subspace of $\mathbb{R}^{d}$. The matrix-weighted
network \eqref{eq:overall} achieves subspace consensus on $\mathbb{V}$
if for any $\boldsymbol{x}(0)\in\mathbb{R}^{dn}$, such that $\lim_{t\rightarrow\infty}\text{P}_{\mathbb{V}}\left(\boldsymbol{x}_{i}(t)-\boldsymbol{x}_{j}(t)\right)=0,\forall i,j\in\mathcal{V},$}
\end{defn}
\textcolor{black}{}
\textcolor{black}{Subspace consensus requires that the projections
of all agents' states onto the subspace $\mathbb{V}$ asymptotically
coincide, while the orthogonal components in $\mathbb{V}^{\perp}$
may vary. To see this, we provide the following example.}
\begin{example}
\textcolor{black}{\label{exa1}Consider a $6$-node matrix-weighted
network \eqref{eq:overall} on a $d=2$ dimensional state space}.
The inter-agent interaction pattern is represented by $\mathcal{G}$,
as shown in Figure \ref{fig:The-network-structure of exm for def}.
Let $\mathbb{V}=\text{{\bf span}}\left\{ \boldsymbol{e}_{1}-2\boldsymbol{e}_{2}\right\} .$
The coupling matrices along the edges are configured as follows: $\text{{\bf row}}(A_{12})=\text{{\bf row}}(A_{13})=\text{{\bf row}}(A_{23})=\text{{\bf row}}(A_{45})=\mathbb{R}^{2},$
and $\text{{\bf row}}(A_{14})=\text{{\bf row}}(A_{16})=\text{{\bf row}}(A_{25})=\mathbb{V}.$
\textcolor{black}{The trajectories of agents' states are shown in
Figure \ref{fig:Example_for_definition} where the projections of
all agents' states onto the subspace $\mathbb{V}$ asymptotically
coincide, while their orthogonal components in $\mathbb{V}^{\perp}$
are different. It is shown that subspace consensus on $\mathbb{V}$
is achieved. }
\begin{figure}
\begin{centering}
\includegraphics[width=8cm]{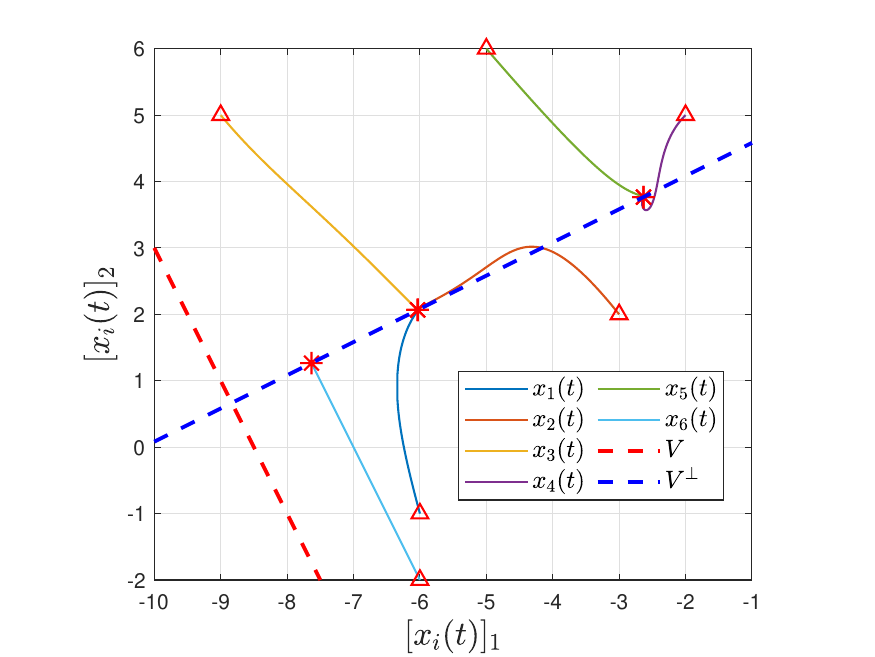}
\par\end{centering}
\centering{}\caption{\label{fig:Example_for_definition}The trajectories of agents' states.
The initial state and the steady state of each agent are indicated
by red triangle and red star, respectively. The red dashed and the
blue dashed lines represent the subspace $\mathbb{V}$ and its orthogonal
complement, respectively.}
\end{figure}
\end{example}
Definition \ref{def:subspace consensus} provides a unified lens through
which to view and analyze a broad spectrum of collective behaviors.
Notably, subspace consensus naturally extends the classical notion
of consensus (when $\mathbb{V}=\mathbb{R}^{d}$), the definition reduces
to the familiar requirement that all agents converge to a common vector
in the full space \textcolor{black}{\cite{olfati2004consensus}}.

The introduction of subspace consensus opens up a rich set of theoretical
questions. Under what conditions on the network topology and the matrix-valued
edge weights does a given network achieve subspace consensus on a
prescribed subspace? How does the structure of the weight matrices
interact with the graph connectivity to determine the asymptotic behavior?
Addressing these questions requires a departure from the scalar-weighted
intuition; in matrix-weighted networks, the properties of the weight
matrices—in particular, their null and row spaces—are equally decisive.

\section{Subspace Consensus\label{sec:Subspace-Consensus}}

\subsection{Algebraic Condition}

We first present an algebraic condition for achieving  subspace consensus. 
\begin{thm}
\label{thm:algebraic condition}The matrix-weighted network \eqref{eq:overall}
achieves  subspace consensus on $\mathbb{V}\subseteq\mathbb{R}^{d}$
if and only if for $\forall\boldsymbol{v}=(\boldsymbol{v}^{\top}_{1},\boldsymbol{v}^{\top}_{2},...,\boldsymbol{v}^{\top}_{n})^{\top}\in\text{{\bf null}}(L),\boldsymbol{v}_{i}\in\mathbb{R}^{d},i,j\in\underline{n},$
\begin{equation}
\mathrm{P}_{\mathbb{V}}(\boldsymbol{v}_{i}-\boldsymbol{v}_{j})=0,\forall i\neq j.\label{eq:algebraic condition}
\end{equation}
\end{thm}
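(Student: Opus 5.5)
The plan is to use the fact that $L$ is symmetric positive semi-definite (Lemma \ref{lem:null-space}). The trajectory of \eqref{eq:overall} then converges to the orthogonal projection of the initial state onto $\text{{\bf null}}(L)$. After that, both directions of the equivalence become statements about vectors in $\text{{\bf null}}(L)$.

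\emph{Step 1: the limit.} Since $L=L^{\top}\succeq0$, diagonalize $L=Q\Lambda Q^{\top}$ with $Q$ orthogonal and $\Lambda=\text{{\bf diag}}\{\lambda_{k}\}$, $\lambda_{k}\ge0$. Then $e^{-Lt}=Qe^{-\Lambda t}Q^{\top}$. Each factor $e^{-\lambda_{k}t}$ tends to $0$ when $\lambda_{k}>0$ and stays equal to $1$ when $\lambda_{k}=0$. Hence
\[
\lim_{t\to\infty}\boldsymbol{x}(t)=\text{P}_{\text{{\bf null}}(L)}\boldsymbol{x}(0)=:\boldsymbol{x}^{*}
\]
for every $\boldsymbol{x}(0)$, and this limit always exists. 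The map $\boldsymbol{x}\mapsto\text{P}_{\mathbb{V}}(\boldsymbol{x}_{i}-\boldsymbol{x}_{j})$ is linear and hence continuous. Therefore
\[
\lim_{t\to\infty}\text{P}_{\mathbb{V}}(\boldsymbol{x}_{i}(t)-\boldsymbol{x}_{j}(t))=\text{P}_{\mathbb{V}}(\boldsymbol{x}_{i}^{*}-\boldsymbol{x}_{j}^{*}).
\]

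\emph{Step 2: sufficiency.} Assume \eqref{eq:algebraic condition} holds for every $\boldsymbol{v}\in\text{{\bf null}}(L)$. For any initial condition, $\boldsymbol{x}^{*}\in\text{{\bf null}}(L)$. So $\text{P}_{\mathbb{V}}(\boldsymbol{x}_{i}^{*}-\boldsymbol{x}_{j}^{*})=0$ for all $i,j$, and by Step 1 subspace consensus on $\mathbb{V}$ is achieved.

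\emph{Step 3: necessity.} Assume subspace consensus holds and take any $\boldsymbol{v}\in\text{{\bf null}}(L)$. Choose $\boldsymbol{x}(0)=\boldsymbol{v}$. Then $\dot{\boldsymbol{x}}=-L\boldsymbol{v}=0$, so $\boldsymbol{x}(t)\equiv\boldsymbol{v}$ is an equilibrium. The limit in Definition \ref{def:subspace consensus} is then the constant $\text{P}_{\mathbb{V}}(\boldsymbol{v}_{i}-\boldsymbol{v}_{j})$, which must therefore be zero for all $i\neq j$. This is exactly \eqref{eq:algebraic condition}.

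The only step needing care is Step 1, namely that the limit exists and equals the projection onto the kernel. This relies on symmetry of $L$, which holds because the paper's undirected setting is read as $A_{ij}=A_{ji}$ with symmetric PSD blocks, and it is already packaged in the positive semi-definiteness of Lemma \ref{lem:null-space}. Notably, necessity does not even need Step 1, only the invariance of equilibria. The explicit structure $\text{{\bf span}}\{\mathcal{R},\mathcal{H}\}$ is not needed for this theorem. Since vectors in $\mathcal{R}$ satisfy the condition trivially, the condition can equivalently be checked on $\mathcal{H}$ alone, which is a useful remark for later results.
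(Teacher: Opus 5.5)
Your proposal is correct and follows essentially the same route as the paper: for sufficiency, orthogonal diagonalization of the symmetric positive semi-definite $L$ shows that every trajectory converges to the orthogonal projection of $\boldsymbol{x}(0)$ onto $\text{{\bf null}}(L)$, and for necessity, initializing at $\boldsymbol{x}(0)=\boldsymbol{v}\in\text{{\bf null}}(L)$ gives a constant trajectory. Beyond the paper, you make explicit the continuity step that passes $\mathrm{P}_{\mathbb{V}}$ through the limit and the symmetry assumption on the weights, both of which the paper uses only implicitly.
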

\begin{proof}
(Sufficiency). Note that the matrix-valued Laplacian matrix $L$ is
positive semi-definite. Then applying orthogonal diagonalization yields
$L=P\Lambda P^{\top}$, where $P=(\boldsymbol{p}_{1},\boldsymbol{p}_{2},\cdots,\boldsymbol{p}_{nd})$
is an orthomormal basis of eigenvectors of $L,\Lambda=\mathrm{diag}\{\lambda_{1},\lambda_{2},\cdots,\lambda_{nd}\}$
satisfying $\lambda_{1}=\cdots=\lambda_{k}=0<\lambda_{k+1}\leq\cdots\leq\lambda_{nd}$.
According to \eqref{eq:overall}
\begin{eqnarray*}
\boldsymbol{x}^{*} & = & \lim_{t\rightarrow\infty}\boldsymbol{x}(t)=\lim_{t\rightarrow\infty}e^{-Lt}\boldsymbol{x}(0)\\
 & = & \lim_{t\rightarrow\infty}Pe^{-\Lambda t}P^{\top}\boldsymbol{x}(0)=\lim_{t\rightarrow\infty}\sum^{nd}_{i=1}e^{-\lambda_{i}t}\boldsymbol{p}_{i}\boldsymbol{p}^{\top}_{i}\boldsymbol{x}(0)\\
 & = & \sum^{k}_{i=1}\left(\boldsymbol{p}^{\top}_{i}\boldsymbol{x}(0)\right)\boldsymbol{p}_{i}.
\end{eqnarray*}
Since $\boldsymbol{p}_{1},\ldots,\boldsymbol{p}_{k}\in\text{{\bf null}}(L)$,
$\boldsymbol{x}^{*}\in\text{{\bf null}}(L)$ for any initial state
$\boldsymbol{x}(0)\in\mathbb{R}^{d}$. Hence, if $\forall\boldsymbol{v}=[\boldsymbol{v}^{\top}_{1},\boldsymbol{v}^{\top}_{2},...,\boldsymbol{v}^{\top}_{n}]^{\top}\in\text{{\bf null}}(L),\boldsymbol{v}_{i}\in\mathbb{R}^{d},i,j\in\underline{n},$
$\mathrm{P}_{\mathbb{V}}(\boldsymbol{v}_{i}-\boldsymbol{v}_{j})=0,\forall i\neq j.$
For any initial state, $\boldsymbol{x}^{*}\in\text{{\bf null}}(L)$.
Thus $\text{P}_{\mathbb{V}}\left(\boldsymbol{x}^{*}_{i}-\boldsymbol{x}^{*}_{j}\right)=0,\forall i\neq j,$
which is equivalent to the matrix-weighted network \eqref{eq:overall}
achieves subspace consensus on $\mathbb{V}$.

\noindent (Necessity). Since $\boldsymbol{p}_{1},\ldots,\boldsymbol{p}_{k}$
is an orthonormal basis of $\text{{\bf null}}(L)$, then for $\forall\boldsymbol{v}\in\text{{\bf null}}(L)$
there exist $l_{1},...,l_{k}\in\mathbb{R}$ such that $\stackrel[i=1]{k}{\sum}l_{i}\boldsymbol{p}_{i}=\boldsymbol{v}$.
Therefore, if one set the initial state as $\boldsymbol{x}(0)=\boldsymbol{v}=\stackrel[i=1]{k}{\sum}l_{i}\boldsymbol{p}_{i}$,
then $\boldsymbol{x}^{*}=\underset{t\rightarrow\infty}{\lim}\boldsymbol{x}(t)=\boldsymbol{v}.$
Hence, if the matrix-weighted network \eqref{eq:overall} achieves
subspace consensus on $\mathbb{V}$, then $\text{P}_{\mathbb{V}}\left(\boldsymbol{x}^{*}_{i}-\boldsymbol{x}^{*}_{j}\right)=0,\forall i\neq j,$for
any initial state. For any $\boldsymbol{v}\in\text{{\bf null}}(L)$,
there exits a initial state such that $\boldsymbol{x}^{*}=\boldsymbol{v}$.
Thus, $\text{P}_{\mathbb{V}}(\boldsymbol{v}_{i}-\boldsymbol{v}_{j})=0,\forall i\neq j$.
On above, the necessity and sufficiency have been proved.
\end{proof}
Theorem \ref{thm:algebraic condition} eventually explicitly characterizes
the structure of $\text{{\bf null}}(L)$ in the context of the subspace
consensus framework. Moreover, we have the following corollary.
\begin{cor}
\label{cor:The-matrix-weighted-network}The matrix-weighted network
\eqref{eq:overall} achieves subspace consensus on $\mathbb{V}\subseteq\mathbb{R}^{d}$
if and only if 
\begin{align*}
\text{{\bf null}}(L)=\{\boldsymbol{v} & =(\boldsymbol{v}^{\top}_{1},\boldsymbol{v}^{\top}_{2},\cdots,\boldsymbol{v}^{\top}_{n})^{\top}\in\mathbb{R}^{dn}\mid\\
 & (\boldsymbol{v}_{i}-\boldsymbol{v}_{j})\in\text{{\bf null}}(A_{ij})\cap\text{{\bf null}}(\mathrm{P}_{\mathbb{V}}),\,\forall(i,j)\in\mathcal{E}\}.
\end{align*}
\end{cor}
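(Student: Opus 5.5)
The plan is to reduce everything to Theorem \ref{thm:algebraic condition}, using Lemma \ref{lem:null-space} to describe $\text{{\bf null}}(L)$ concretely. Denote by $\mathcal{S}$ the set on the right-hand side of the claimed identity, and note that $\text{{\bf null}}(\mathrm{P}_{\mathbb{V}})=\mathbb{V}^{\perp}$.

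The first step is to observe that $\mathcal{R}\subseteq\mathcal{H}$, since a vector $\mathds{1}_{n}\otimes\boldsymbol{w}$ has all differences $\boldsymbol{v}_{i}-\boldsymbol{v}_{j}$ equal to zero. Hence Lemma \ref{lem:null-space} simplifies to $\text{{\bf null}}(L)=\mathcal{H}$. Comparing the defining conditions of $\mathcal{S}$ and $\mathcal{H}$ edge by edge then gives $\mathcal{S}\subseteq\mathcal{H}=\text{{\bf null}}(L)$ unconditionally. The corollary is therefore equivalent to the reverse inclusion $\text{{\bf null}}(L)\subseteq\mathcal{S}$.

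For necessity, assume subspace consensus on $\mathbb{V}$. By Theorem \ref{thm:algebraic condition}, every $\boldsymbol{v}\in\text{{\bf null}}(L)$ satisfies $\mathrm{P}_{\mathbb{V}}(\boldsymbol{v}_{i}-\boldsymbol{v}_{j})=0$ for all pairs, and in particular for every edge $(i,j)\in\mathcal{E}$. Since also $\boldsymbol{v}\in\mathcal{H}$, we get $\boldsymbol{v}_{i}-\boldsymbol{v}_{j}\in\text{{\bf null}}(A_{ij})\cap\mathbb{V}^{\perp}$ for every edge, that is, $\boldsymbol{v}\in\mathcal{S}$. This gives $\text{{\bf null}}(L)=\mathcal{S}$.

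For sufficiency, assume $\text{{\bf null}}(L)=\mathcal{S}$ and take any $\boldsymbol{v}\in\text{{\bf null}}(L)$ and any nodes $i\neq j$. Choose a path $i=i_{0},i_{1},\ldots,i_{m}=j$ in $\mathcal{G}$ and telescope:
\[
\boldsymbol{v}_{i}-\boldsymbol{v}_{j}=\sum_{k=0}^{m-1}(\boldsymbol{v}_{i_{k}}-\boldsymbol{v}_{i_{k+1}}).
\]
Each summand lies in $\mathbb{V}^{\perp}$, and $\mathbb{V}^{\perp}$ is a subspace, so $\mathrm{P}_{\mathbb{V}}(\boldsymbol{v}_{i}-\boldsymbol{v}_{j})=0$. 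Theorem \ref{thm:algebraic condition} then yields subspace consensus.

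The main obstacle is this passage from an edgewise condition to a condition on all pairs of nodes. It requires a path between every pair of nodes, so I would state explicitly that $\mathcal{G}$ is assumed connected, as is implicit throughout the paper. Without this assumption the equivalence fails whenever $\mathbb{V}\neq\{0\}$. For example, take a disconnected graph whose edges are all positive definite. Then $\mathcal{S}=\text{{\bf null}}(L)$, since both consist of vectors that are constant on each component. Yet consensus fails across components, because the constants on different components are independent and can differ along $\mathbb{V}$.
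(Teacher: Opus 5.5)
Your proof is correct and takes essentially the route the paper intends: the paper states this corollary without proof as a direct consequence of Theorem~\ref{thm:algebraic condition} together with Lemma~\ref{lem:null-space} (which gives $\text{{\bf null}}(L)=\mathcal{H}$), and that is exactly your reduction. Your remark that the ``if'' direction needs $\mathcal{G}$ to be connected, which the paper never states, is also right, and your all-positive-definite disconnected example is a valid counterexample for every $\mathbb{V}\neq\{0\}$ (not every disconnected graph breaks the equivalence, though, because the right-hand identity can also fail there).
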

\begin{rem}
By Corollary \ref{cor:The-matrix-weighted-network}, one can revisit
the well-known necessary and sufficient condition for the classical
consensus problem \cite{olfati2004consensus}. In the scalar-weighted
network case, one only needs to choose $\mathbb{V}=\mathbb{R}^{d}$
and $\text{P}_{\mathbb{V}}=\mathbf{I}_{d}$. Then, the matrix-weighted
network \eqref{eq:overall} achieves subspace consensus on $\mathbb{R}^{d}$
if and only if $\boldsymbol{v}_{i}-\boldsymbol{v}_{j}=0,\forall i\neq j,\forall\boldsymbol{v}=(\boldsymbol{v}^{\top}_{1},\boldsymbol{v}^{\top}_{2},\cdots,\boldsymbol{v}^{\top}_{n})^{\top}\in\text{{\bf null}}(L)$,
which is equivalent to $\text{{\bf null}}(L)=\mathcal{R}$ according
to \prettyref{lem:null-space}.
\end{rem}

\subsection{Subspace Tree}

We proceed to examine topological conditions under which the subspace
consensus can be achieved. We shall begin by examining tree networks.
We introduce the following definition in the context of subspace consensus.
\begin{defn}[Subspace Tree]
\label{def: V-spanning tree} Let $\mathbb{V}$ be a subspace of
$\mathbb{R}^{d}$. A (spanning tree) tree $\mathcal{T}$  in the matrix-weighted
network \eqref{eq:overall} is a ($\mathbb{V}$-spanning tree) $\mathbb{V}$-tree
if $\mathbb{V}\subseteq\text{{\bf row}}(A_{ij})$ for all $(i,j)\in\mathcal{E}(\mathcal{T}).$
\end{defn}
\begin{rem}
Besides purely graph-theoretic characterization of a tree or a spanning
tree, Definition \ref{def: V-spanning tree} introduces the subspace
information encoded in the protocol \eqref{eq:protocol} to explicitly
capture the collective behavior of the matrix-weighted network \eqref{eq:overall}.
Notably, if the matrix-weighted network \eqref{eq:overall} has only
one agent, then the network itself forms its spanning tree, which
is also a $\mathbb{V}$-spanning tree for any subspace $\mathbb{V}\subseteq\mathbb{R}^{d}.$\textcolor{olive}{}
\end{rem}
\textcolor{olive}{}
Based on the above-defined $\mathbb{V}$-spanning tree, we obtain
the following sufficient condition for subspace consensus.
\begin{thm}
\textcolor{violet}{\label{thm: sufficient condition}}The matrix-weighted
network \eqref{eq:overall} achieves  subspace consensus on $\mathbb{V}\subseteq\mathbb{R}^{d}$
if $\mathcal{G}$ has a $\mathbb{V}$-spanning tree.
\end{thm}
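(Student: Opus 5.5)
The plan is to reduce the statement to the algebraic criterion of Theorem \ref{thm:algebraic condition}. It therefore suffices to show that every $\boldsymbol{v}=(\boldsymbol{v}^{\top}_{1},\ldots,\boldsymbol{v}^{\top}_{n})^{\top}\in\text{{\bf null}}(L)$ satisfies $\mathrm{P}_{\mathbb{V}}(\boldsymbol{v}_{i}-\boldsymbol{v}_{j})=0$ for all $i\neq j$.

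The first step is an edge-wise characterization of $\text{{\bf null}}(L)$. Write the Laplacian quadratic form as
\[
\boldsymbol{v}^{\top}L\boldsymbol{v}=\sum_{(i,j)\in\mathcal{E}}(\boldsymbol{v}_{i}-\boldsymbol{v}_{j})^{\top}A_{ij}(\boldsymbol{v}_{i}-\boldsymbol{v}_{j}),
\]
with each undirected edge counted once. This uses the symmetry $A_{ij}=A_{ji}$, which is implicit in the undirected, positive semi-definite setting. Every term is nonnegative, so $L\boldsymbol{v}=0$ forces $(\boldsymbol{v}_{i}-\boldsymbol{v}_{j})^{\top}A_{ij}(\boldsymbol{v}_{i}-\boldsymbol{v}_{j})=0$ on every edge. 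Since $A_{ij}\succeq0$, this gives $\boldsymbol{v}_{i}-\boldsymbol{v}_{j}\in\text{{\bf null}}(A_{ij})$ for every $(i,j)\in\mathcal{E}$. The same conclusion is also available from Lemma \ref{lem:null-space}: $\mathcal{R}$ trivially satisfies the edge condition and $\mathcal{H}$ satisfies it by definition, so their span does as well.

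The second step uses the $\mathbb{V}$-spanning tree $\mathcal{T}$. For symmetric $A_{ij}$ we have $\text{{\bf null}}(A_{ij})=\text{{\bf row}}(A_{ij})^{\perp}$. The hypothesis $\mathbb{V}\subseteq\text{{\bf row}}(A_{ij})$ then gives $\text{{\bf null}}(A_{ij})\subseteq\mathbb{V}^{\perp}$, and hence $\mathrm{P}_{\mathbb{V}}(\boldsymbol{v}_{i}-\boldsymbol{v}_{j})=0$ for every edge $(i,j)\in\mathcal{E}(\mathcal{T})$.

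The third step passes from edges to arbitrary pairs. Because $\mathcal{T}$ is spanning, any two nodes $i,j$ are joined by a unique path $i=k_{0},k_{1},\ldots,k_{m}=j$ in $\mathcal{T}$. Telescoping along this path and using linearity of $\mathrm{P}_{\mathbb{V}}$ gives
\[
\mathrm{P}_{\mathbb{V}}(\boldsymbol{v}_{i}-\boldsymbol{v}_{j})=\sum_{s=1}^{m}\mathrm{P}_{\mathbb{V}}(\boldsymbol{v}_{k_{s-1}}-\boldsymbol{v}_{k_{s}})=0.
\]
Theorem \ref{thm:algebraic condition} then yields subspace consensus on $\mathbb{V}$.

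I expect no genuine obstacle. The only delicate point is justifying $\text{{\bf null}}(A_{ij})\perp\text{{\bf row}}(A_{ij})$ and the reduction from $L\boldsymbol{v}=0$ to the edge-wise null condition. Both rely on symmetry and positive semi-definiteness of the weights, so I will state these explicitly as the standing assumptions of the paper.
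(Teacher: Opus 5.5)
Your proof is correct and follows essentially the paper's argument. You characterize $\text{{\bf null}}(L)$ edge-wise through the quadratic form, use $\text{{\bf null}}(A_{ij})=\text{{\bf row}}(A_{ij})^{\perp}\subseteq\mathbb{V}^{\perp}$ on the tree edges, and propagate along the spanning tree; the only difference is that you get convergence into $\text{{\bf null}}(L)$ by citing Theorem~\ref{thm:algebraic condition}, whereas the paper re-derives it with the Lyapunov function $\frac{1}{2}\boldsymbol{x}^{\top}\boldsymbol{x}$ and LaSalle's principle, so your version is slightly cleaner and spells out the telescoping step that the paper leaves implicit.
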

\begin{proof}
\noindent Consider the Lyapunov function of the state vector candidate
$V(\boldsymbol{x})=\frac{\boldsymbol{x}^{\top}\boldsymbol{x}}{2}$,
which is positive semi-definite, radially unbounded, and continuously
differentiable on $\mathbb{R}^{nd}$. Its time derivative along \eqref{eq:overall}
is
\begin{eqnarray*}
\dot{V}(\boldsymbol{x}) & = & \boldsymbol{x}^{\top}\dot{\boldsymbol{x}}=-\boldsymbol{x}^{\top}L\boldsymbol{x}\\
 & = & -\sum_{(i,j)\in\mathcal{E}}(\boldsymbol{x}_{i}-\boldsymbol{x}_{j})^{\top}A_{ij}(\boldsymbol{x}_{i}-\boldsymbol{x}_{j})\leq0.
\end{eqnarray*}
Hence, $V$ is non-increasing along system \eqref{eq:overall} and
$\boldsymbol{x}(t)$ remains bounded. Define the set $\Omega=\{\boldsymbol{x}\in\mathbb{R}^{nd}\mid\dot{V}(\boldsymbol{x})=0\}.$
From the expression of $\dot{V}$, we have $\boldsymbol{x}\in\Omega$
if and only if $(\boldsymbol{x}_{i}-\boldsymbol{x}_{j}\LyXZeroWidthSpace)\in\text{{\bf null}}(A_{ij}\LyXZeroWidthSpace),\forall(i,j)\in\mathcal{E}$
since $A_{ij}\succeq0$. According to \prettyref{lem:null-space},
this is precisely $\Omega=\text{{\bf null}}(L)$.

\noindent The set $\Omega=\text{{\bf null}}(L)$ is a linear subspace
and is invariant under \eqref{eq:overall} because $L\boldsymbol{x}=0$
implies $\dot{\boldsymbol{x}}=0$. Therefore, by LaSalle's Invariance
Principle, every trajectory under \eqref{eq:overall} asymptotically
approaches $\text{{\bf null}}(L)$. That is to say, for any initial
state the steady state of \eqref{eq:overall} satisfies $\boldsymbol{x}^{*}\in\text{{\bf null}}(L)$.
From Theorem \ref{thm:algebraic condition}, $(\boldsymbol{x}^{*}_{i}-\boldsymbol{x}^{*}_{j})\in\text{{\bf null}}(A_{ij}),\,(i,j)\in\mathcal{E}$.
Let $\mathcal{E^{\prime}}\subseteq\mathcal{E}$ be the edge set of
the $\mathbb{V}$-spanning tree, then $(\boldsymbol{x}^{*}_{i}-\boldsymbol{x}^{*}_{j})\in\text{{\bf null}}(A_{ij}),\,(i,j)\in\mathcal{E^{\prime}}$.
For $\text{{\bf null}}(A_{ij})=\text{{\bf row}}(A_{ij})^{\perp}$
and $\mathbb{V}\subseteq\text{{\bf row}}(A_{ij})$, $(\boldsymbol{x}^{*}_{i}-\boldsymbol{x}^{*}_{j})\in\text{{\bf null}}(A_{ij})\subseteq\mathbb{V}^{\perp},\,(i,j)\in\mathcal{E^{\prime}}$,
i.e., $\text{P}_{\mathbb{V}}(\boldsymbol{x}^{*}_{i}-\boldsymbol{x}^{*}_{j})=0,\,(i,j)\in\mathcal{E^{\prime}}$.
For $\mathbb{V}$-spanning tree contains all nodes in $\mathcal{V}$,
$\text{P}_{\mathbb{V}}(\boldsymbol{x}^{*}_{i}-\boldsymbol{x}^{*}_{j})=0,\,\forall i,j\in\mathcal{V}$,
which means $\mathcal{G}$ achieves subspace consensus on $\mathbb{V}$.
\end{proof}
It can be seen from the Example \ref{exa1} that $\mathcal{G}$ has
a $\mathbb{V}$-spanning tree; therefore, $\mathcal{G}$ reaches subspace
consensus on $\mathbb{V},$ which serves as a validation of Theorem
\ref{thm: sufficient condition}. If $\mathcal{G}$ has a $\mathbb{V}$-spanning
tree, then for all $\mathbb{V}^{\prime}\subseteq\mathbb{V},$ it also
has $\mathbb{V}^{\prime}$-spanning tree. Thus, one has the following
corollary.
\begin{cor}
\label{thm: suspace of suspace}Let $\mathbb{V}^{\prime}$be a subspace
of $\mathbb{V}.$ The matrix-weighted network \eqref{eq:overall}
achieves  subspace consensus on $\mathbb{V}^{\prime}$ if $\mathcal{G}$
has a $\mathbb{V}$-spanning tree.
\end{cor}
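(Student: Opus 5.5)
The plan is to reduce Corollary \ref{thm: suspace of suspace} to Theorem \ref{thm: sufficient condition}. I would show that a $\mathbb{V}$-spanning tree of $\mathcal{G}$ is automatically a $\mathbb{V}^{\prime}$-spanning tree whenever $\mathbb{V}^{\prime}\subseteq\mathbb{V}$. Let $\mathcal{T}$ be a $\mathbb{V}$-spanning tree of $\mathcal{G}$. By Definition \ref{def: V-spanning tree}, $\mathcal{T}$ contains all nodes of $\mathcal{G}$ and $\mathbb{V}\subseteq\text{{\bf row}}(A_{ij})$ for every $(i,j)\in\mathcal{E}(\mathcal{T})$. Transitivity of inclusion then gives $\mathbb{V}^{\prime}\subseteq\mathbb{V}\subseteq\text{{\bf row}}(A_{ij})$ for every such edge. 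The same tree $\mathcal{T}$, with no change to its node or edge sets, therefore meets Definition \ref{def: V-spanning tree} with $\mathbb{V}^{\prime}$ in place of $\mathbb{V}$. Applying Theorem \ref{thm: sufficient condition} with the subspace $\mathbb{V}^{\prime}$ yields subspace consensus on $\mathbb{V}^{\prime}$.

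As an independent check, one could argue directly from consensus on $\mathbb{V}$. Theorem \ref{thm: sufficient condition} gives $\mathrm{P}_{\mathbb{V}}(\boldsymbol{x}_{i}(t)-\boldsymbol{x}_{j}(t))\to0$ for all $i,j$. Since $\mathbb{V}^{\prime}\subseteq\mathbb{V}$, the projections satisfy $\mathrm{P}_{\mathbb{V}^{\prime}}=\mathrm{P}_{\mathbb{V}^{\prime}}\mathrm{P}_{\mathbb{V}}$. Boundedness (indeed continuity) of the linear map $\mathrm{P}_{\mathbb{V}^{\prime}}$ then gives $\mathrm{P}_{\mathbb{V}^{\prime}}(\boldsymbol{x}_{i}(t)-\boldsymbol{x}_{j}(t))\to0$. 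The projection identity holds because $\mathbb{V}^{\perp}\subseteq\mathbb{V}^{\prime\perp}$, so $\mathrm{P}_{\mathbb{V}^{\prime}}$ annihilates the $\mathbb{V}^{\perp}$-component of any vector.

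There is no real obstacle here. The only point needing care is that the spanning property of $\mathcal{T}$ is purely graph-theoretic and does not depend on the chosen subspace, so it carries over unchanged. The trivial case $\mathbb{V}^{\prime}=\{0\}$ also causes no trouble, since then $\mathrm{P}_{\mathbb{V}^{\prime}}=0$.
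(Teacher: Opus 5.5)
Your proposal is correct and follows the paper's route: the paper observes that a $\mathbb{V}$-spanning tree is also a $\mathbb{V}^{\prime}$-spanning tree whenever $\mathbb{V}^{\prime}\subseteq\mathbb{V}$, and then applies Theorem~\ref{thm: sufficient condition}, exactly as in your first paragraph. Your alternative check via $\mathrm{P}_{\mathbb{V}^{\prime}}=\mathrm{P}_{\mathbb{V}^{\prime}}\mathrm{P}_{\mathbb{V}}$ is also valid, but it is not needed.
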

In graph theory, for a connected undirected graph $\mathcal{G}=(\mathcal{V},\mathcal{E},A)$,
a spanning tree of $\mathcal{G}$ always exists.  However, the matrix-weighted
network \eqref{eq:overall} may not have a $\mathbb{V}$-spanning
tree where $\mathbb{V}\subseteq\mathbb{R}^{d}.$ We proceed to provide
a subspace consensus condition for those matrix-weighted networks
without $\mathbb{V}$-spanning tree.
\begin{cor}
The matrix-weighted network \eqref{eq:overall} achieves subspace
consensus on $\mathbb{V}\subseteq\mathbb{R}^{d}$ if  there exists
an orthogonal direct-sum decomposition $\mathbb{V}={\displaystyle \bigoplus^{m}_{k=1}}\mathbb{V}_{k}$
such that $\mathcal{G}$ has a $\mathbb{V}_{k}$-spanning tree for
$k\in\underline{m}.$
\end{cor}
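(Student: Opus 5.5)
The plan is to reduce the statement to Theorem \ref{thm: sufficient condition} applied once for each summand $\mathbb{V}_{k}$, and then to reassemble the conclusions using the orthogonality of the decomposition. Consider first one fixed $k\in\underline{m}$. Since $\mathcal{G}$ has a $\mathbb{V}_{k}$-spanning tree, Theorem \ref{thm: sufficient condition} shows that \eqref{eq:overall} achieves subspace consensus on $\mathbb{V}_{k}$. By Theorem \ref{thm:algebraic condition}, this means that for every $\boldsymbol{v}\in\text{{\bf null}}(L)$ we have $\mathrm{P}_{\mathbb{V}_{k}}(\boldsymbol{v}_{i}-\boldsymbol{v}_{j})=0$ for all $i\neq j$. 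An equivalent route, copying the proof of Theorem \ref{thm: sufficient condition}, is to note that the steady state $\boldsymbol{x}^{*}$ lies in $\text{{\bf null}}(L)$. On each edge of the $\mathbb{V}_{k}$-tree, $\boldsymbol{x}^{*}_{i}-\boldsymbol{x}^{*}_{j}\in\text{{\bf null}}(A_{ij})\subseteq\mathbb{V}_{k}^{\perp}$. Summing along tree paths then gives this for all pairs $i,j$.

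Next I combine the $m$ facts. Because the decomposition $\mathbb{V}=\bigoplus_{k=1}^{m}\mathbb{V}_{k}$ is orthogonal, the projection splits as $\mathrm{P}_{\mathbb{V}}=\sum_{k=1}^{m}\mathrm{P}_{\mathbb{V}_{k}}$. This holds because the $\mathbb{V}_{k}$ are mutually orthogonal, so the $\mathrm{P}_{\mathbb{V}_{k}}$ are orthogonal projections with $\mathrm{P}_{\mathbb{V}_{k}}\mathrm{P}_{\mathbb{V}_{l}}=0$ for $k\neq l$. Their sum is therefore an orthogonal projection, and its range is $\sum_{k}\mathbb{V}_{k}=\mathbb{V}$. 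Hence, for any $\boldsymbol{v}\in\text{{\bf null}}(L)$,
\[
\mathrm{P}_{\mathbb{V}}(\boldsymbol{v}_{i}-\boldsymbol{v}_{j})=\sum_{k=1}^{m}\mathrm{P}_{\mathbb{V}_{k}}(\boldsymbol{v}_{i}-\boldsymbol{v}_{j})=0 .
\]
Theorem \ref{thm:algebraic condition} then yields subspace consensus on $\mathbb{V}$. Equivalently, $\boldsymbol{x}^{*}_{i}-\boldsymbol{x}^{*}_{j}\in\bigcap_{k}\mathbb{V}_{k}^{\perp}=\bigl(\sum_{k}\mathbb{V}_{k}\bigr)^{\perp}=\mathbb{V}^{\perp}$.

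The only step requiring care is the projection identity $\mathrm{P}_{\mathbb{V}}=\sum_{k}\mathrm{P}_{\mathbb{V}_{k}}$, or equivalently $\bigcap_{k}\mathbb{V}_{k}^{\perp}=\mathbb{V}^{\perp}$. The intersection form actually needs only $\mathbb{V}=\sum_{k}\mathbb{V}_{k}$, not orthogonality, so I would present it that way for robustness. The rest is a direct invocation of earlier results. Note that the different $\mathbb{V}_{k}$-spanning trees may use different edge sets, but this causes no difficulty, since each tree is used separately.
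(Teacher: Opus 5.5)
Your proposal is correct and matches the paper's proof: apply Theorem \ref{thm: sufficient condition} to each $\mathbb{V}_k$, then recombine using $\mathrm{P}_{\mathbb{V}}=\sum_{k}\mathrm{P}_{\mathbb{V}_k}$. Your intersection form $\bigcap_k\mathbb{V}_k^{\perp}=\mathbb{V}^{\perp}$ is a worthwhile refinement, since it shows the corollary holds for any $\mathbb{V}=\sum_k\mathbb{V}_k$, not only for orthogonal direct sums.
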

\begin{proof}
According to Theorem \ref{thm: sufficient condition}, $\mathcal{G}$
has a $\mathbb{V}_{k}$-spanning tree for all $k\in\underline{m},$
then $\mathcal{G}$ achieves subspace consensus on subspace $\mathbb{V}_{k},\:\forall k\in\underline{m}.$
That is, for any initial state, one have
\begin{equation}
\text{P}_{\mathbb{V}_{k}}\left(\boldsymbol{x}^{*}_{i}\right)=\text{P}_{\mathbb{V}_{k}}\left(\boldsymbol{x}^{*}_{j}\right),\forall i,j\in\mathcal{V},\:k\in\underline{m}.\label{eq: consensus on Vk}
\end{equation}
For an orthogonal direct-sum decomposition of $\mathbb{V}$, vectors
in $\mathbb{V}$ can be represented uniquely as sums of their orthogonal
projections onto each component subspace. Hence, summing the equation
\eqref{eq: consensus on Vk} over $k$, we obtain $\text{P}_{\mathbb{V}}\left(\boldsymbol{x}^{*}_{i}\right)=\text{P}_{\mathbb{V}}\left(\boldsymbol{x}^{*}_{j}\right),\:\forall i,j\in\mathcal{V}.$
\end{proof}

\subsection{Subspace Connectivity}

Recall the network connectivity for scalar-weighted networks. Let
$\mathcal{P}_{ij}=(v_{0},\cdots,v_{k})$ denote a path from node $i$
to node $j$ where $v_{0}=i,v_{k}=j$ and $(v_{s},v_{s+1})\in\mathcal{E},\:\forall s=0,\cdots,k-1.$
Let $\mathcal{S}^{ij}$ denote the set of all paths from node $i$
to node $j.$ In an undirected scalar-weighted network $\mathcal{G}$,
two nodes $i$ and $j$ are connected if $\mathcal{G}$ contains a
path between $i$ and $j$. An undirected scalar-weighted network
$\mathcal{G}$ is connected if every pair of nodes in $\mathcal{G}$
is connected.

Note that this connectivity definition is directly borrowed from graph
theory and does not explicitly indicate the property of edge weights.
Since for scalar-weighted networks, the edge weight matrix is eventually
$A_{ij}=a_{ij}I_{d}$, which implies that $\text{{\bf row}}(A_{ij})=\mathbb{R}^{d}$
and $\text{{\bf null}}(A_{ij})=\left\{ 0\right\} $. According to
Lemma \ref{lem:null-space}, $\text{{\bf null}}(L)=\mathcal{R}$.
Thus, a notable feature of scalar-weighted networks is that network
connectivity directly translates into achieving consensus, whereas
this is not true for matrix-weighted networks. In this case, it is
therefore intricate to obtain a purely graph-theoretic condition for
achieving consensus without any assumptions on the matrix-valued edge
weights \textcolor{black}{\cite{trinh2018matrix,pan2018bipartite}.
}\textcolor{red}{}

\textcolor{violet}{}
In fact, the steady states of two agents connected by a path in a
matrix-weighted network \eqref{eq:overall} are influenced by the
nullspace of the path. We define the null space of a path $\mathcal{P}_{ij}$
connecting $i$ and $j$ as  $\text{{\bf null}}(\mathcal{P}_{ij})=\stackrel{k-1}{{\displaystyle \mathop{\sum}_{s=0}}}\text{{\bf null}}(A_{v_{s}v_{s+1}}).$
\begin{lem}
\textcolor{violet}{\label{lem:union/intersaction}}Let $i$ and $j$
be two agents in the matrix-weighted network \eqref{eq:overall} connected
via $m\geq1$ paths $\mathcal{P}_{1},...,\mathcal{P}_{m}$. Then the
difference of the steady state between $i$ and $j$ satisfies $\boldsymbol{x}^{*}_{i}-\boldsymbol{x}^{*}_{j}\in{\displaystyle \bigcap^{m}_{l=1}}\text{{\bf null}}(\mathcal{P}_{l}).$
\end{lem}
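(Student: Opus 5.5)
The argument has two stages: first locate the steady state in $\text{{\bf null}}(L)$, then telescope along each path. As in the proof of Theorem \ref{thm:algebraic condition}, write $L=P\Lambda P^{\top}$. Then $\boldsymbol{x}^{*}=\lim_{t\to\infty}e^{-Lt}\boldsymbol{x}(0)$ exists and equals the orthogonal projection of $\boldsymbol{x}(0)$ onto $\text{{\bf null}}(L)$, so $\boldsymbol{x}^{*}\in\text{{\bf null}}(L)$. Equivalently, one can use the LaSalle argument in the proof of Theorem \ref{thm: sufficient condition}.

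Next we characterize $\text{{\bf null}}(L)$ edgewise. By Lemma \ref{lem:null-space}, $\text{{\bf null}}(L)=\text{{\bf span}}\{\mathcal{R},\mathcal{H}\}$. Every element of $\mathcal{R}$ has equal blocks $\boldsymbol{v}_{i}=\boldsymbol{v}_{j}$, so it lies in $\mathcal{H}$; hence $\text{{\bf null}}(L)=\mathcal{H}$. A more self-contained route uses $\boldsymbol{x}^{\top}L\boldsymbol{x}=\sum_{(i,j)\in\mathcal{E}}(\boldsymbol{x}_{i}-\boldsymbol{x}_{j})^{\top}A_{ij}(\boldsymbol{x}_{i}-\boldsymbol{x}_{j})$ together with $A_{ij}\succeq0$. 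Either way, for every edge $(u,w)\in\mathcal{E}$ we get
\[
\boldsymbol{x}^{*}_{u}-\boldsymbol{x}^{*}_{w}\in\text{{\bf null}}(A_{uw}).
\]

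Now fix one path $\mathcal{P}_{l}=(v_{0},\ldots,v_{k})$ with $v_{0}=i$ and $v_{k}=j$. Telescoping gives
\[
\boldsymbol{x}^{*}_{i}-\boldsymbol{x}^{*}_{j}=\sum_{s=0}^{k-1}\left(\boldsymbol{x}^{*}_{v_{s}}-\boldsymbol{x}^{*}_{v_{s+1}}\right).
\]
Each summand lies in $\text{{\bf null}}(A_{v_{s}v_{s+1}})$, and a sum of subspaces is closed under addition of one element from each. Therefore the total lies in $\sum_{s}\text{{\bf null}}(A_{v_{s}v_{s+1}})=\text{{\bf null}}(\mathcal{P}_{l})$.

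Since the path was arbitrary, the same vector $\boldsymbol{x}^{*}_{i}-\boldsymbol{x}^{*}_{j}$ lies in $\text{{\bf null}}(\mathcal{P}_{l})$ for each $l\in\underline{m}$. Hence it lies in $\bigcap_{l=1}^{m}\text{{\bf null}}(\mathcal{P}_{l})$.

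The only delicate point is the first step, the existence of the limit and its membership in $\text{{\bf null}}(L)$; this is handled by symmetry and positive semi-definiteness of $L$ as above. The rest is the telescoping identity and the definition of the path null space as a subspace sum.
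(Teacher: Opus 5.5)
Your proof is correct and follows the same route as the paper. Both arguments place $\boldsymbol{x}^{*}$ in $\text{{\bf null}}(L)$ via the spectral argument of Theorem \ref{thm:algebraic condition}, derive $\boldsymbol{x}^{*}_{u}-\boldsymbol{x}^{*}_{w}\in\text{{\bf null}}(A_{uw})$ on each edge, telescope along each path, and intersect over paths. Your observation that $\mathcal{R}\subseteq\mathcal{H}$, hence $\text{{\bf null}}(L)=\mathcal{H}$, properly justifies the edgewise step; the paper attributes that step somewhat loosely to Theorem \ref{thm:algebraic condition} rather than to Lemma \ref{lem:null-space}.
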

\begin{proof}
According to the proof of Theorem \ref{thm:algebraic condition},
$\boldsymbol{x}^{*}\in\text{{\bf null}}(L)$. Assume $\mathcal{P}_{1}=(v_{0}=i,\cdots,v_{k_{1}}=j)$,
then $\boldsymbol{x}^{*}_{v_{s}}-\boldsymbol{x}^{*}_{v_{s+1}}\in\text{{\bf null}}(A_{v_{s}v_{s+1}}),0\leq s\leq k_{1}-1$
according to Theorem \ref{thm:algebraic condition}. Hence, 
\begin{eqnarray*}
\boldsymbol{x}^{*}_{i}-\boldsymbol{x}^{*}_{j} & = & \stackrel{k_{1}-1}{{\displaystyle \mathop{\sum}_{s=0}}}\left(\boldsymbol{x}^{*}_{v_{s}}-\boldsymbol{x}^{*}_{v_{s+1}}\right)\\
 & \in & {\displaystyle \mathop{\sum}^{k_{1}-1}_{s=0}}\text{{\bf null}}(A_{v_{s}v_{s+1}})=\text{{\bf null}}(\mathcal{P}_{1})
\end{eqnarray*}
Similarly, $\boldsymbol{x}^{*}_{i}-\boldsymbol{x}^{*}_{j}\in\text{{\bf null}}(\mathcal{P}_{2}),...,\boldsymbol{x}^{*}_{i}-\boldsymbol{x}^{*}_{j}\in\text{{\bf null}}(\mathcal{P}_{m})$
. Therefore, $\boldsymbol{x}^{*}_{i}-\boldsymbol{x}^{*}_{j}\in{\displaystyle \bigcap^{m}_{l=1}}\text{{\bf null}}(\mathcal{P}_{l})$.
\end{proof}
Building upon Lemma \ref{lem:union/intersaction}, we observe that
different interconnection patterns between two agents impose fundamentally
different algebraic constraints on their steady disagreement, i.e.,
$\boldsymbol{x}^{*}_{i}-\boldsymbol{x}^{*}_{j}$.  Consequently,
when multiple paths exist between two agents, the admissible disagreement
is determined by the intersection over all paths of the sum of null
spaces along each path. This motivates the following definition of
subspace connectivity, which characterizes the connectivity of a matrix-weighted
network relative to a prescribed subspace $\mathbb{V}\subseteq\mathbb{R}^{d}.$
\begin{defn}[Subspace Connectivity]
\label{def: Subspace Connectivity} Let $\mathbb{V}$ be a subspace
of $\mathbb{R}^{d}$. The matrix-weighted network \eqref{eq:overall}
is $\mathbb{V}$-connected if $\mathbb{V}\perp\left(\underset{\mathcal{P}\in\mathcal{S}^{ij}}{\bigcap}\text{{\bf null}}(\mathcal{P})\right)$
for all $i\not=j\in\mathcal{V}.$ 
\end{defn}
\textcolor{red}{}In matrix-weighted networks, the weight of each
edge can be chosen as an arbitrary positive definite or positive semi-definite
matrix, rather than a scalar. The rank of the weight can then be less
than full rank and more than zero rank even in connected edges, resulting
in a ``weak connection'' \cite{trinh2018matrix,pan2021consensus}.
The positive semi-definite matrices corresponding to weak connections
fail to capture the error communicated between agents in their nullspace,
resulting in diffusion between agents even when they are connected.
Therefore, we introduce the notion of $\mathbb{V}$-connectivity to
precisely characterize how the network coupling weights interact with
the prescribed subspace.

$\mathbb{V}$-connectivity plays a crucial role in determining the
convergence of agents' states. Specifically, when the network is $\mathbb{V}$-connected,
the only admissible steady-state disagreement between any two agents
is confined to directions orthogonal to $\mathbb{V}$, which guarantees
that all components of the agents’ states lying in $\mathbb{V}$ asymptotically
align. In other words, $\mathbb{V}$-connectivity ensures that the
projection of all agent states onto $\mathbb{V}$ reaches agreement,
while disagreements may persist only in $\mathbb{V}^{\perp}$. 

To derive the condition for subspace consensus from the perspective
of $\mathbb{V}$-connectivity, we first consider the agreement of
states among a specific subset of nodes in a matrix-weighted network
$\mathcal{G}=(\mathcal{V},\mathcal{E},A)$, where a $\mathbb{V}$-tree
always exists. Accordingly, the node set of $\mathcal{G}$ can be
partitioned into tree nodes and non-tree nodes, while the edge set
can be partitioned into three classes: edges in the tree, edges between
tree nodes but not included in the tree, and the remaining edges,
as stated in Lemma \ref{lem:node edge partition}.
\begin{lem}
\label{lem:node edge partition}Let $\mathcal{T}\subseteq\mathcal{G}$
be a $\mathbb{V}$-tree. The node set $\mathcal{V}$ of $\mathcal{G}$
can be partitioned into $\mathcal{V}(\mathcal{T})$ and $\mathcal{V}\setminus\mathcal{V}(\mathcal{T})$,
while the edge set $\mathcal{E}$ can be partitioned into $\mathcal{E}(\mathcal{T})$,
$\mathcal{E}(\mathcal{V}(\mathcal{T}))\setminus\mathcal{E}(\mathcal{T})$
and $\mathcal{E}\setminus\mathcal{E}(\mathcal{V}(\mathcal{T})).$
\end{lem}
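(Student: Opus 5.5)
The statement is an elementary set-theoretic partition, so I will verify two things directly from the definitions: the proposed classes are pairwise disjoint, and their union is the whole set. First I fix notation. $\mathcal{V}(\mathcal{T})\subseteq\mathcal{V}$ is the node set of the $\mathbb{V}$-tree $\mathcal{T}$, and $\mathcal{E}(\mathcal{T})$ is its edge set. Following the usual convention, $\mathcal{E}(\mathcal{V}(\mathcal{T}))=\{(i,j)\in\mathcal{E}\mid i,j\in\mathcal{V}(\mathcal{T})\}$ denotes the edges of the subgraph of $\mathcal{G}$ induced by the tree nodes. The $\mathbb{V}$-tree property (that is, $\mathbb{V}\subseteq\text{{\bf row}}(A_{ij})$ on tree edges) plays no role here. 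Only the fact that $\mathcal{T}$ is a subgraph of $\mathcal{G}$ is used.

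For the node set, $\mathcal{V}(\mathcal{T})$ and $\mathcal{V}\setminus\mathcal{V}(\mathcal{T})$ are complementary in $\mathcal{V}$ because $\mathcal{V}(\mathcal{T})\subseteq\mathcal{V}$. They are therefore disjoint with union $\mathcal{V}$. If $\mathcal{T}$ spans $\mathcal{G}$, the second class may be empty, and I will allow empty classes in the partition.

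For the edge set, the key step is the chain of inclusions $\mathcal{E}(\mathcal{T})\subseteq\mathcal{E}(\mathcal{V}(\mathcal{T}))\subseteq\mathcal{E}$. The first inclusion holds because every tree edge is an edge of $\mathcal{G}$ with both endpoints in $\mathcal{V}(\mathcal{T})$. The second holds by definition. With nested sets $A\subseteq B\subseteq C$, the three sets $A$, $B\setminus A$ and $C\setminus B$ are pairwise disjoint and their union is $C$. Applying this with $A=\mathcal{E}(\mathcal{T})$, $B=\mathcal{E}(\mathcal{V}(\mathcal{T}))$ and $C=\mathcal{E}$ gives the claim. The third class, $\mathcal{E}\setminus\mathcal{E}(\mathcal{V}(\mathcal{T}))$, is then precisely the set of edges with at least one endpoint outside the tree, which matches the informal description before the lemma.

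The only step needing care is the first inclusion, namely that tree edges really are edges of $\mathcal{G}$ joining tree nodes. This follows immediately from $\mathcal{T}\subseteq\mathcal{G}$ and from the definition of a tree's node set as the endpoints of its edges, or a single node when there are no edges. I will state this convention explicitly, and I do not expect any genuine obstacle.
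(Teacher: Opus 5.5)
Your proof is correct. The paper states this lemma without proof and treats it as immediate. Your nested inclusions $\mathcal{E}(\mathcal{T})\subseteq\mathcal{E}(\mathcal{V}(\mathcal{T}))\subseteq\mathcal{E}$ are exactly the routine verification it leaves implicit, and your induced-subgraph reading of $\mathcal{E}(\mathcal{V}(\mathcal{T}))$ matches Example~\ref{exa:node edge partition}, where the middle class is $\{(2,4)\}$ and the last class is $\{(1,3),(2,3),(3,4)\}$.
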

We now provide an example to illustrate Lemma \ref{lem:node edge partition}.
\begin{example}
\label{exa:node edge partition}Consider a $5$-node matrix-weighted
network \eqref{eq:overall} on a $d=3$ dimensional state space. The
inter-agent interaction pattern is represented by $\mathcal{G}$,
as shown in Figure \ref{fig10:The-network-structure of  4 vertex complete graph}.
Let $\mathbb{V}=\text{{\bf span}}\left\{ \boldsymbol{e}_{1}\right\} .$
The coupling matrices along the edges are configured as follows: $\text{{\bf row}}(A_{12})=\text{{\bf row}}(A_{14})=\mathbb{V},$
and $\text{{\bf row}}(A_{13})=\text{{\bf row}}(A_{23})=\text{{\bf row}}(A_{24})=\text{{\bf row}}(A_{34})=\text{{\bf span}}\left\{ e_{2}\right\} .$
Let $\mathcal{V}(\mathcal{T})=\left\{ 1,2,4\right\} $ and $\mathcal{E}(\mathcal{T})=\left\{ (1,2),(1,4)\right\} .$
Clearly, $\mathcal{T}$ is a $\mathbb{V}$-tree. Additionally, $\mathcal{E}(\mathcal{V}(\mathcal{T}))\setminus\mathcal{E}(\mathcal{T})=\left\{ (2,4)\right\} $
and $\mathcal{E}\setminus\mathcal{E}(\mathcal{V}(\mathcal{T}))=\left\{ (1,3),(2,3),(3,4)\right\} .$ 
\end{example}
According to Lemma \ref{lem:node edge partition}, one can partition
the incidence matrix $H\in\mathbb{R}^{m\times n}$ of $\mathcal{G}$
as 
\[
H=\left[\begin{array}{cc}
H_{1} & 0\\
H_{2} & 0\\
H_{3} & H_{4}
\end{array}\right],
\]
where the block columns correspond to the node sets $\mathcal{V}(\mathcal{T})$
and $\mathcal{V}\setminus\mathcal{V}(\mathcal{T})$, and the block
rows correspond to the edge sets $\mathcal{E}(\mathcal{T})$, $\mathcal{E}(\mathcal{V}(\mathcal{T}))\setminus\mathcal{E}(\mathcal{T})$
and $\mathcal{E}\setminus\mathcal{E}(\mathcal{V}(\mathcal{T})).$

Next, we present Lemma \ref{lem:H1 H2}, which describes the relationship
between $H_{1}$ and $H_{2}$.
\begin{lem}
\cite{Zelazo2011TACedge}\label{lem:H1 H2} Suppose that $\mathcal{G}$
has a spanning tree $\mathcal{T}.$ The incidence matrix $H\in\mathbb{R}^{m\times n}$
of $\mathcal{G}$ can be represented as 
\[
H=\left[\begin{array}{c}
H_{\mathcal{E}(\mathcal{T})}\\
H_{\mathcal{E}\setminus\mathcal{E}(\mathcal{T})}
\end{array}\right],
\]
where $H_{\mathcal{E}(\mathcal{T})}\in\mathbb{R}^{(n-1)\times n}$
associates with the edges belonging to the tree $\mathcal{T}$ and
$H_{\mathcal{E}\setminus\mathcal{E}(\mathcal{T})}\in\mathbb{R}^{(m-n+1)\times n}$
associates with the remaining edges in $\mathcal{G}.$ Moreover, there
exists a matrix $T\in\mathbb{R}^{(m-n+1)\times m}$ such that $TH_{\mathcal{E}(\mathcal{T})}=H_{\mathcal{E}\setminus\mathcal{E}(\mathcal{T})}.$
\end{lem}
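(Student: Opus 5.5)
The claim is the standard fact that every row of the incidence matrix corresponding to a non-tree edge is an integer combination of the tree-edge rows. My plan is to build $T$ row by row from the fundamental cycles of $\mathcal{T}$.

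First I fix orientations: each edge $e=(u,w)$ has an incidence row $h_e^{\top}=\boldsymbol{e}_u^{\top}-\boldsymbol{e}_w^{\top}\in\mathbb{R}^{1\times n}$. Now take a non-tree edge $e=(u,w)\in\mathcal{E}\setminus\mathcal{E}(\mathcal{T})$. Since $\mathcal{T}$ is spanning and acyclic, there is a unique path $u=v_0,v_1,\ldots,v_k=w$ in $\mathcal{T}$. The differences telescope:
\[
\boldsymbol{e}_u-\boldsymbol{e}_w=\sum_{s=0}^{k-1}(\boldsymbol{e}_{v_s}-\boldsymbol{e}_{v_{s+1}}).
\]
Each summand equals $\pm h_{t}$, where $t$ is the tree edge $\{v_s,v_{s+1}\}$ and the sign is $+1$ when its orientation agrees with the traversal direction and $-1$ otherwise. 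This gives $h_e^{\top}=\tau_e^{\top}H_{\mathcal{E}(\mathcal{T})}$, with $\tau_e\in\{0,\pm1\}^{n-1}$ supported on the tree edges of that path.

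Stacking the vectors $\tau_e^{\top}$ over all $m-n+1$ non-tree edges gives $T$, and then $TH_{\mathcal{E}(\mathcal{T})}=H_{\mathcal{E}\setminus\mathcal{E}(\mathcal{T})}$ follows immediately. For the composition to be defined, $T$ has $n-1$ columns, one per tree edge. The statement's dimension $(m-n+1)\times m$ appears to be a typo for $(m-n+1)\times(n-1)$; equivalently, one can pad with zero columns and multiply by the full $H$ restricted to its tree rows.

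As a cross-check, there is also an existence-only argument. $H_{\mathcal{E}(\mathcal{T})}$ has rank $n-1$, because a connected graph has incidence rank $n-1$, and its row space is $\mathds{1}_n^{\perp}$. Every row of $H$ also lies in $\mathds{1}_n^{\perp}$, so each row of $H_{\mathcal{E}\setminus\mathcal{E}(\mathcal{T})}$ is a linear combination of the tree rows. I would present the constructive path version because it is explicit.

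The only delicate points are the sign bookkeeping for orientations and the uniqueness of the tree path. Neither is a real obstacle.
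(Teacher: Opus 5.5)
Your argument is correct. You are also right that $T$ must be $(m-n+1)\times(n-1)$ for $TH_{\mathcal{E}(\mathcal{T})}$ to be defined. If one insists on the stated size, the padding remark should be read as $[\,T\;\;0\,]\,H=H_{\mathcal{E}\setminus\mathcal{E}(\mathcal{T})}$, with $[\,T\;\;0\,]\in\mathbb{R}^{(m-n+1)\times m}$ acting on the full $H$.

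The paper gives no proof of its own and only cites Zelazo and Mesbahi. The standard argument there is essentially your ``cross-check'': $H_{\mathcal{E}(\mathcal{T})}$ has full row rank $n-1$ with row space $\mathds{1}_{n}^{\perp}$, and every row of $H$ lies in $\mathds{1}_{n}^{\perp}$. This yields the closed form $T=H_{\mathcal{E}\setminus\mathcal{E}(\mathcal{T})}H_{\mathcal{E}(\mathcal{T})}^{\top}\bigl(H_{\mathcal{E}(\mathcal{T})}H_{\mathcal{E}(\mathcal{T})}^{\top}\bigr)^{-1}$, which is the transpose of the formula in that reference because it uses node-by-edge incidence matrices.

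Your primary route through fundamental cycles is genuinely different. It needs no rank computation or inversion, only the existence of tree paths. It produces an explicit integral $T$ whose row for a non-tree edge is the signed indicator of the tree path that closes that edge's fundamental cycle. The algebraic route, in exchange, gives a formula free of orientation and path bookkeeping. The full row rank of $H_{\mathcal{E}(\mathcal{T})}$ also shows that $T$ is unique, so the two constructions coincide. Incidentally, this tells you the pseudo-inverse expression always has entries in $\{0,\pm1\}$.
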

\textcolor{olive}{}
\textcolor{black}{Based on Lemma \ref{lem:H1 H2}, $H_{2}=TH_{1}$
for some $T.$ Thus, the edges in $\mathcal{E}\setminus\mathcal{E}(\mathcal{T})$
do not impose additional constraints on the nodes in $\mathcal{V}(\mathcal{T})$.
Thus, $\boldsymbol{x}^{*}_{i}-\boldsymbol{x}^{*}_{j}\in\text{{\bf null}}(A_{ij})=\left(\text{{\bf row}}(A_{ij})\right)^{\bot}\subseteq\mathbb{V}^{\bot}$
for all $i,j\in\mathcal{V}(\mathcal{T}),$ which implies $\boldsymbol{x}_{i}-\boldsymbol{x}_{j}\in\mathbb{V}^{\bot},\forall i,j\in\mathcal{V}(\mathcal{T}).$
That is, $\text{P}_{\mathbb{V}}\boldsymbol{x}^{*}_{i}=\text{P}_{\mathbb{V}}\boldsymbol{x}^{*}_{j},\:\forall i,j\in\mathcal{V}(\mathcal{T}).$}

Then, we present a sufficient condition under which a matrix-weighted
network achieves subspace consensus on a given subspace.
\begin{thm}
\textcolor{olive}{\label{thm: sufficient condition V connected}}The
matrix-weighted network \eqref{eq:overall} achieves subspace consensus
on $\mathbb{V}\subseteq\mathbb{R}^{d}$ if $\mathcal{G}$ is $\mathbb{V}$-connected. 
\end{thm}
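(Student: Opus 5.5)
The plan is to combine the convergence fact already established with Lemma \ref{lem:union/intersaction}, and then apply the definition of $\mathbb{V}$-connectivity directly.

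First, recall from the proof of Theorem \ref{thm:algebraic condition} that for every initial state $\boldsymbol{x}(0)$ the limit $\boldsymbol{x}^{*}=\lim_{t\to\infty}\boldsymbol{x}(t)$ exists and lies in $\text{{\bf null}}(L)$. Equivalently, the LaSalle argument in the proof of Theorem \ref{thm: sufficient condition} gives this without reference to any tree. Consequently, by Lemma \ref{lem:null-space}, $\boldsymbol{x}^{*}_{u}-\boldsymbol{x}^{*}_{w}\in\text{{\bf null}}(A_{uw})$ for every edge $(u,w)\in\mathcal{E}$.

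Next, fix any pair $i\neq j$. The definition of $\mathbb{V}$-connectivity is only meaningful (and non-vacuous) when $\mathcal{S}^{ij}\neq\emptyset$. So I will first argue that $\mathbb{V}$-connectivity forces $\mathcal{G}$ to be connected whenever $\mathbb{V}\neq\{0\}$. Under the natural convention that an empty intersection equals $\mathbb{R}^{d}$, $\mathbb{V}\perp\mathbb{R}^{d}$ forces $\mathbb{V}=\{0\}$, in which case the theorem is trivial since $\text{P}_{\mathbb{V}}=0$. Hence we may assume $\mathcal{S}^{ij}$ is nonempty. Since $\mathcal{G}$ is finite, $\mathcal{S}^{ij}$ may be restricted to simple paths and is therefore finite, say $\mathcal{P}_{1},\dots,\mathcal{P}_{m}$; non-simple walks only enlarge the null-space sum, so they do not affect the intersection. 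Applying Lemma \ref{lem:union/intersaction} gives
\[
\boldsymbol{x}^{*}_{i}-\boldsymbol{x}^{*}_{j}\in\bigcap_{\mathcal{P}\in\mathcal{S}^{ij}}\text{{\bf null}}(\mathcal{P}).
\]

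Finally, $\mathbb{V}$-connectivity says this intersection is orthogonal to $\mathbb{V}$, that is, it is contained in $\mathbb{V}^{\perp}=\text{{\bf null}}(\text{P}_{\mathbb{V}})$. Hence $\text{P}_{\mathbb{V}}(\boldsymbol{x}^{*}_{i}-\boldsymbol{x}^{*}_{j})=0$. By continuity of $\text{P}_{\mathbb{V}}$, this yields $\lim_{t\to\infty}\text{P}_{\mathbb{V}}(\boldsymbol{x}_{i}(t)-\boldsymbol{x}_{j}(t))=0$ for all $i,j$, which is Definition \ref{def:subspace consensus}. As an alternative route, the same containment applied to an arbitrary $\boldsymbol{v}\in\text{{\bf null}}(L)$ in place of $\boldsymbol{x}^{*}$ verifies \eqref{eq:algebraic condition} directly, and Theorem \ref{thm:algebraic condition} then concludes.

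The only delicate point is the bookkeeping around $\mathcal{S}^{ij}$: whether paths are assumed simple, so that the intersection is over a finite family and Lemma \ref{lem:union/intersaction} applies verbatim, and the degenerate disconnected case. I expect to handle both with the remarks above. Everything else is a direct chaining of the earlier results.
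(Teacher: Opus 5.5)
Your proof is correct. Its core step is the same as the paper's: Lemma~\ref{lem:union/intersaction} places $\boldsymbol{x}^{*}_{i}-\boldsymbol{x}^{*}_{j}$ in $\bigcap_{\mathcal{P}\in\mathcal{S}^{ij}}\text{{\bf null}}(\mathcal{P})$, and $\mathbb{V}$-connectivity puts that intersection inside $\mathbb{V}^{\perp}$. The difference is in how the argument is organized.

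The paper first fixes a $\mathbb{V}$-tree $\mathcal{T}$. It uses the edge partition of Lemma~\ref{lem:node edge partition} and the incidence-matrix relation of Lemma~\ref{lem:H1 H2} to get $\text{P}_{\mathbb{V}}\boldsymbol{x}^{*}_{i}=\text{P}_{\mathbb{V}}\boldsymbol{x}^{*}_{j}$ on $\mathcal{V}(\mathcal{T})$. It then invokes $\mathbb{V}$-connectivity only to attach each node outside $\mathcal{V}(\mathcal{T})$ to a tree node.

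You apply the path-intersection argument to every pair directly. This is legitimate because the hypothesis is assumed for all $i\neq j$, so the tree step is redundant. In general the paper can only guarantee a $\mathbb{V}$-tree in the trivial single-node sense, and in that case its proof reduces to yours.

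Your route is shorter and avoids the incidence-matrix machinery. It also makes explicit two points the paper leaves implicit:
\begin{itemize}
\item With the convention $\bigcap_{\emptyset}=\mathbb{R}^{d}$, $\mathbb{V}$-connectivity with $\mathbb{V}\neq\{0\}$ forces $\mathcal{G}$ to be connected.
\item Admitting non-simple walks does not change the intersection.
\end{itemize}
The paper's decomposition buys only a formally weaker hypothesis. Its proof uses the $\mathbb{V}$-connectivity condition just for pairs joining an off-tree node to the tree.
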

\begin{proof}
For a matrix-weighted network $\mathcal{G}=(\mathcal{V},\mathcal{E},A)$
and a subspace $\mathbb{V}\subseteq\mathbb{R}^{d}$, there exists
a $\mathbb{V}$-tree $\mathcal{T}\subseteq\mathcal{G}$. Based on
Lemma \ref{lem:node edge partition} and Lemma \ref{lem:H1 H2}, one
has $\text{P}_{\mathbb{V}}\boldsymbol{x}^{*}_{i}=\text{P}_{\mathbb{V}}\boldsymbol{x}^{*}_{j},\:\forall i,j\in\mathcal{V}(\mathcal{T}).$
To examine whether $\mathcal{G}$ achieves subspace consensus on $\mathbb{V},$
one only needs to check whether all nodes in $\mathcal{V}\setminus\mathcal{V}(\mathcal{T})$
have the same projection of equilibrium points on $\mathbb{V}$ as
that of the nodes in $\mathcal{V}(\mathcal{T}).$ Based on Lemma \ref{lem:union/intersaction},
for a node $i\notin\mathcal{V}(\mathcal{T})$ and a node $j\in\mathcal{V}(\mathcal{T})$,
we can write $\boldsymbol{x}^{*}_{i}-\boldsymbol{x}^{*}_{j}\in\underset{\mathcal{P}\in\mathcal{S}^{ij}}{\bigcap}\text{{\bf null}}(\mathcal{P}).$
Due to $\mathbb{V}\perp(\underset{\mathcal{P}\in\mathcal{S}^{ij}}{\bigcap}\text{{\bf null}}(\mathcal{P}))$
, one has $\text{P}_{\mathbb{V}}\boldsymbol{x}^{*}_{i}-\text{P}_{\mathbb{V}}\boldsymbol{x}^{*}_{j}=0.$
Therefore, $\text{P}_{\mathbb{V}}\boldsymbol{x}^{*}_{i}=\text{P}_{\mathbb{V}}\boldsymbol{x}^{*}_{j},\:\forall i,j\in\mathcal{V}.$\textcolor{olive}{}
\end{proof}
We now provide an example to illustrate that Theorem \ref{thm: sufficient condition V connected}
gives only a sufficient condition, while the necessity does not hold.
\begin{example}
\label{exa: 4 vertex complete graph} \textcolor{black}{Consider a
$4$-node matrix-weighted network \eqref{eq:overall} on a $d=3$
dimensional state space.} 
\begin{figure}[tbh]
\centering{}\begin{tikzpicture}[scale=0.6]
    \node (n2) at (0.5,3) [circle,fill=black!20,draw] {2};
    \node (n1) at (0.5,0) [circle,fill=black!20,draw] {1};
    \node (n4) at (3.5,0) [circle,fill=black!20,draw] {4};
	\node (n3) at (3.5,3) [circle,fill=black!20,draw] {3};

	\draw [-,  thick, color=black!80] (n1) -- (n2); 
	\draw [-,  thick, color=black!80] (n1) -- (n3); 
	\draw [-,  thick, color=black!80] (n1) -- (n4); 
    \draw [-,  thick, color=black!80] (n2) -- (n3); 
	\draw [-,  thick, color=black!80] (n2) -- (n4); 
    \draw [-,  thick, color=black!80] (n3) -- (n4); 
\end{tikzpicture}\caption{\label{fig10:The-network-structure of  4 vertex complete graph}The
network structure in Example \ref{exa: 4 vertex complete graph}.}
\end{figure}
\begin{figure}[tbh]
\begin{centering}
\includegraphics[width=8cm]{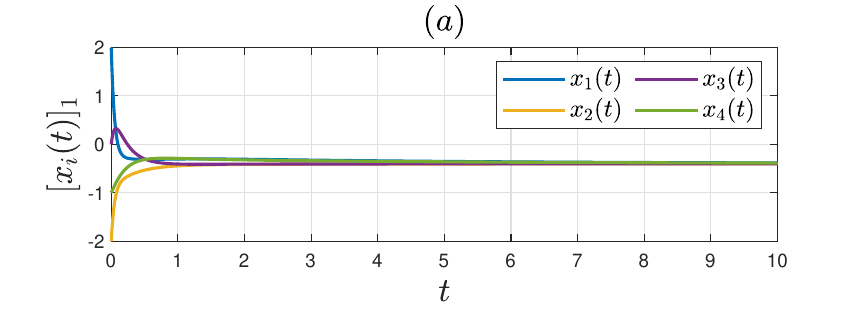}
\par\end{centering}
\begin{centering}
\includegraphics[width=8cm]{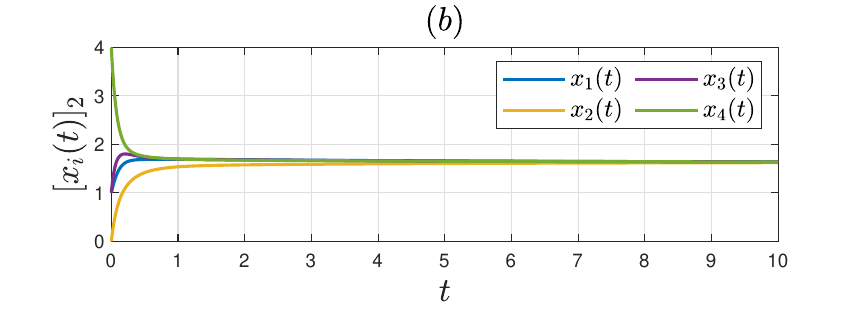}
\par\end{centering}
\begin{centering}
\includegraphics[width=8cm]{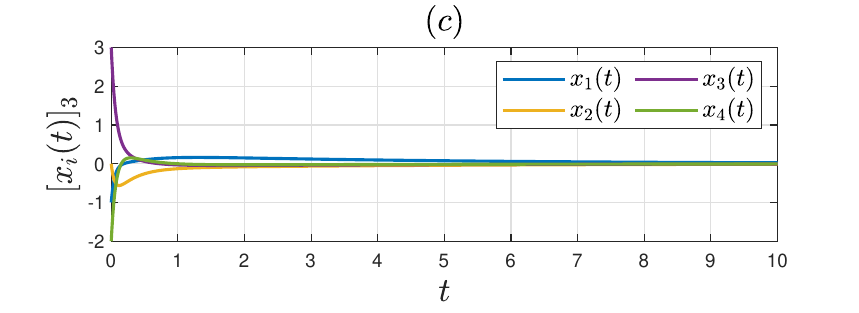}
\par\end{centering}
\centering{}\caption{\label{fig4: 4 vertex trajectory}(a) The trajectories of the first
state component. (b) The trajectories of the second state component.
(c) The trajectories of the third state component.}
\end{figure}
 The inter-agent interaction pattern is represented by $\mathcal{G}$,
as shown in Figure \ref{fig10:The-network-structure of  4 vertex complete graph}.
Let $\mathbb{V}=\mathbb{R}^{3}$. The coupling matrices along the
edges are configured as follows: $\text{{\bf null}}(A_{12})=\text{{\bf span}}\left\{ \boldsymbol{e}_{1}+\boldsymbol{e}_{2}+2\boldsymbol{e}_{3}\right\} ,$
$\text{{\bf null}}(A_{13})=\text{{\bf span}}\left\{ \boldsymbol{e}_{1}+2\boldsymbol{e}_{3}\right\} ,$
$\text{{\bf null}}(A_{14})=\text{{\bf span}}\left\{ \boldsymbol{e}_{3}\right\} ,$
$\text{{\bf null}}(A_{23})=\text{{\bf span}}\left\{ \boldsymbol{e}_{2}\right\} ,$
$\text{{\bf null}}(A_{24})=\text{{\bf span}}\left\{ \boldsymbol{e}_{1}+\boldsymbol{e}_{2}+\boldsymbol{e}_{3}\right\} ,$
and $\text{{\bf null}}(A_{34})=\text{{\bf span}}\left\{ \boldsymbol{e}_{1}\right\} .$

Consider all the paths from agent $1$ to agent $2.$ One has $\underset{\mathcal{P}\in\mathcal{S}^{12}}{\bigcap}\text{{\bf null}}(\mathcal{P})=\text{{\bf span}}\left\{ \boldsymbol{e}_{1}+\boldsymbol{e}_{2}+2\boldsymbol{e}_{3}\right\} .$
Then the condition $\mathbb{V}\perp(\underset{\mathcal{P}\in\mathcal{S}^{12}}{\bigcap}\text{{\bf null}}(\mathcal{P}))$
does not hold. However, $\mathcal{G}$ achieves subspace consensus
on $\mathbb{V}.$ The trajectories of states are shown in Figure \ref{fig4: 4 vertex trajectory}. 
\end{example}

\subsection{Graph Cut}

We next present a necessary condition under which a matrix-weighted
network \eqref{eq:overall} achieves subspace consensus on a given
subspace. This condition is stated in terms of cuts of the underlying
graph. For a matrix-weighted network $\mathcal{G}=(\mathcal{V},\mathcal{E},A),$
let $S$ denote a node subset of $\mathcal{V},$ and $\mathcal{E}(S,\bar{S})$
denote the cut-set induced by the cut $C=(S,\bar{S}),$ where $\bar{S}=\mathcal{V}\setminus S.$
We next present a necessary condition under which a matrix-weighted
network achieves subspace consensus on a given subspace. 
\begin{thm}
\label{thm: necessary condition cut}If the matrix-weighted network
\eqref{eq:overall} with a cut $C=(S,\bar{S})$ achieves subspace
consensus on $\mathbb{V}\subseteq\mathbb{R}^{d}$, then one has $\mathbb{V}\perp\left(\underset{(i,j)\in\mathcal{E}(S,\bar{S})}{\bigcap}\text{{\bf null}}(A_{ij})\right),\:\forall S\subseteq\mathcal{V}.$
\end{thm}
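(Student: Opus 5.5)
We argue by contraposition, constructing an explicit vector in $\text{{\bf null}}(L)$ that violates the algebraic condition of Theorem \ref{thm:algebraic condition}. Fix a nonempty proper subset $S\subsetneq\mathcal{V}$ (for $S=\emptyset$ or $S=\mathcal{V}$ the cut-set is empty, and the statement is read with the nontrivial cuts in mind). Suppose $\mathbb{V}$ is not orthogonal to $\mathbb{W}:=\bigcap_{(i,j)\in\mathcal{E}(S,\bar{S})}\text{{\bf null}}(A_{ij})$. Then there exists $\boldsymbol{w}\in\mathbb{W}$ with $\text{P}_{\mathbb{V}}\boldsymbol{w}\neq0$. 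Indeed, if $\text{P}_{\mathbb{V}}\boldsymbol{w}=0$ for every $\boldsymbol{w}\in\mathbb{W}$, then $\mathbb{W}\subseteq\mathbb{V}^{\perp}$, i.e. $\mathbb{V}\perp\mathbb{W}$.

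Define $\boldsymbol{v}\in\mathbb{R}^{dn}$ blockwise by $\boldsymbol{v}_{i}=\boldsymbol{w}$ for $i\in S$ and $\boldsymbol{v}_{i}=0$ for $i\in\bar{S}$. We show $\boldsymbol{v}\in\mathcal{H}\subseteq\text{{\bf null}}(L)$ using Lemma \ref{lem:null-space}, checking each edge $(i,j)\in\mathcal{E}$:
\begin{itemize}
\item If both endpoints lie in $S$ or both lie in $\bar{S}$, then $\boldsymbol{v}_{i}-\boldsymbol{v}_{j}=0\in\text{{\bf null}}(A_{ij})$.
\item If $(i,j)\in\mathcal{E}(S,\bar{S})$, then $\boldsymbol{v}_{i}-\boldsymbol{v}_{j}=\pm\boldsymbol{w}\in\mathbb{W}\subseteq\text{{\bf null}}(A_{ij})$.
\end{itemize}
Hence $\boldsymbol{v}\in\text{{\bf null}}(L)$. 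Alternatively, one can verify directly that $\boldsymbol{v}^{\top}L\boldsymbol{v}=\sum_{(i,j)\in\mathcal{E}}(\boldsymbol{v}_{i}-\boldsymbol{v}_{j})^{\top}A_{ij}(\boldsymbol{v}_{i}-\boldsymbol{v}_{j})=0$, which together with $L\succeq0$ gives $L\boldsymbol{v}=0$.

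Now pick $i\in S$ and $j\in\bar{S}$. Then $\text{P}_{\mathbb{V}}(\boldsymbol{v}_{i}-\boldsymbol{v}_{j})=\text{P}_{\mathbb{V}}\boldsymbol{w}\neq0$, so \eqref{eq:algebraic condition} fails. By the necessity part of Theorem \ref{thm:algebraic condition}, the network does not achieve subspace consensus on $\mathbb{V}$. Equivalently, taking the initial state $\boldsymbol{x}(0)=\boldsymbol{v}$ gives a stationary trajectory whose projected disagreement never vanishes, which contradicts the hypothesis. Therefore $\mathbb{V}\perp\mathbb{W}$ for every cut.

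The only delicate point is the degenerate cuts: for $S=\emptyset$ or $S=\mathcal{V}$ the intersection over an empty cut-set must be handled by convention. The construction above also covers disconnected graphs: a nonempty proper $S$ with empty cut-set gives $\mathbb{W}=\mathbb{R}^{d}$ under the convention, and then the condition correctly forces $\mathbb{V}=\{0\}$. Apart from this, the argument is direct.
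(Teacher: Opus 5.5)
Your proof is correct and follows essentially the same route as the paper: contraposition, choosing $\boldsymbol{w}$ in the cut-set null-space intersection with $\mathrm{P}_{\mathbb{V}}\boldsymbol{w}\neq0$, and placing $\boldsymbol{w}$ on one side of the cut and $0$ on the other to get an equilibrium in $\text{{\bf null}}(L)$ whose projected disagreement never vanishes. The paper puts $\boldsymbol{w}$ on $\bar{S}$ instead of $S$, which makes no difference, and your proof adds three things the paper's omits: the edge-by-edge check of membership in $\mathcal{H}$, the correct projected gap $\mathrm{P}_{\mathbb{V}}\boldsymbol{w}$ (the paper writes $\boldsymbol{w}$), and the observation that the trivial cuts $S=\emptyset,\mathcal{V}$ must be excluded or handled by convention.
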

\begin{proof}
We prove the necessary condition by its contrapositive. Suppose that
the matrix-weighted network \eqref{eq:overall} achieves subspace
consensus on $\mathbb{V}.$ However, there exists a cut $C=(S,\bar{S})$
such that $\mathbb{V}\perp(\underset{(i,j)\in\mathcal{E}(S,\bar{S})}{\bigcap}\text{{\bf null}}(A_{ij}))$
does not hold. That is, there exists a non-zero vector $\boldsymbol{w}\in\underset{(i,j)\in\mathcal{E}(S,\bar{S})}{\bigcap}\text{{\bf null}}(A_{ij})$
and $\text{P}_{\mathbb{V}}\boldsymbol{w}\neq0.$ Define the global
state vector $\boldsymbol{v}=(\boldsymbol{v}^{\top}_{1},\boldsymbol{v}^{\top}_{2},\cdots,\boldsymbol{v}^{\top}_{n})^{\top}\in\mathbb{R}^{dn}$
by $\boldsymbol{v}_{i}=0$ if $i\in S$ and $\boldsymbol{v}_{i}=\boldsymbol{w}$
if $i\in\bar{S}.$ Hence, one has $\boldsymbol{v}\in\text{{\bf null}}(L)$
which implies $\boldsymbol{v}$ represents an equilibrium of the system,
i.e., 
\[
\text{P}_{\mathbb{V}}\boldsymbol{x}^{*}_{i}=0\neq\boldsymbol{w}=\text{P}_{\mathbb{V}}\boldsymbol{x}^{*}_{j},\:\forall i\in S,j\in\bar{S}.
\]
Then the network does not reach subspace consensus on $\mathbb{V},$
which contradicts the assumption. This contradiction implies that
our initial assumption is false. Hence, the aforementioned condition
holds.
\end{proof}
We now provide an example to illustrate that Theorem \ref{thm: necessary condition cut}
gives only a necessary condition, while the sufficiency does not hold.
\begin{example}
\label{exa: 3 vertex graph}\textcolor{black}{Consider a $3$-node
matrix-weighted network \eqref{eq:overall} on a $d=3$ dimensional
state space.} The inter-agent interaction pattern is represented by
$\mathcal{G}$, as shown in Figure \ref{fig10:The-network-structure of 3 vertex graph}.
Let $\mathbb{V}=\text{{\bf span}}\left\{ \boldsymbol{e}_{2},\boldsymbol{e}_{3}\right\} .$
The coupling matrices along the edges are configured as follows: $\text{{\bf null}}(A_{12})=\text{{\bf span}}\left\{ \boldsymbol{e}_{1},\boldsymbol{e}_{2}\right\} ,$$\text{{\bf null}}(A_{13})=\text{{\bf span}}\left\{ \boldsymbol{e}_{2},\boldsymbol{e}_{3}\right\} $
and $\text{{\bf null}}(A_{23})=\text{{\bf span}}\left\{ \boldsymbol{e}_{1},\boldsymbol{e}_{3}\right\} .$

One has $\mathbb{V}\perp(\underset{(i,j)\in\mathcal{E}(S,\bar{S})}{\bigcap}\text{{\bf null}}(A_{ij})),\:\forall S\subseteq\mathcal{V}.$
However, $\mathcal{G}$ does not achieve subspace consensus on $\mathbb{V}.$
The trajectories of states are shown in Figure \ref{fig4: 3 vertex trajectory}.
\begin{figure}[tbh]
\centering{}\begin{tikzpicture}[scale=0.5]

    \node (n1) at (1.5,{1.5*sqrt(3)}) [circle,fill=black!20,draw] {1};
    \node (n2) at (0,0) [circle,fill=black!20,draw] {2};
    \node (n3) at (3,0) [circle,fill=black!20,draw] {3};

    \draw[thick] (n1) -- (n2);
    \draw[thick] (n1) -- (n3);
    \draw[thick] (n2) -- (n3);


    \draw[red, dashed, thick]
        ({1.5-1.2},{1.5*sqrt(3)})
        arc (180:360:1.2);
   \draw[red, dashed, thick]
        (0.6,{-0.6*sqrt(3)})
        arc (300:480:1.2);
   
    \draw[red, dashed, thick]
        (3.6,{0.6*sqrt(3)})
        arc (60:240:1.2);

\end{tikzpicture}\caption{\label{fig10:The-network-structure of 3 vertex graph}The network
structure in Example \ref{exa: 3 vertex graph}. The three red dashed
lines represent three cuts in the network.}
\end{figure}
\begin{figure}[tbh]
\begin{centering}
\includegraphics[width=8cm]{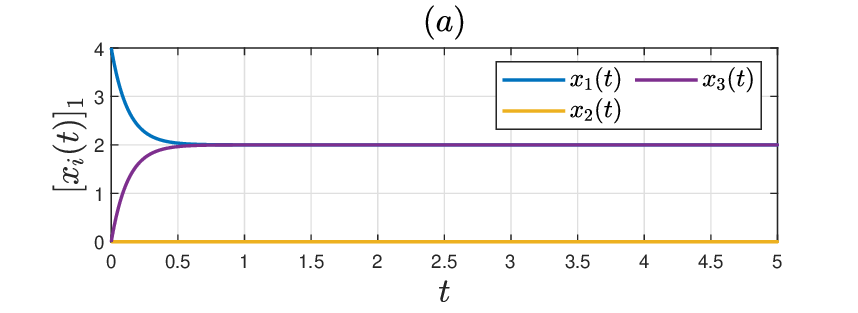}
\par\end{centering}
\begin{centering}
\includegraphics[width=8cm]{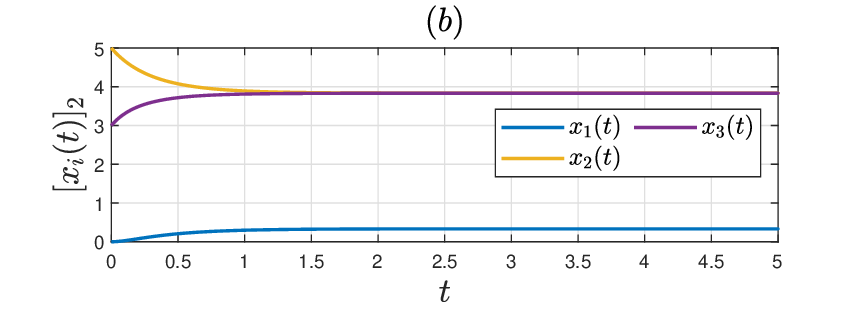}
\par\end{centering}
\begin{centering}
\includegraphics[width=8cm]{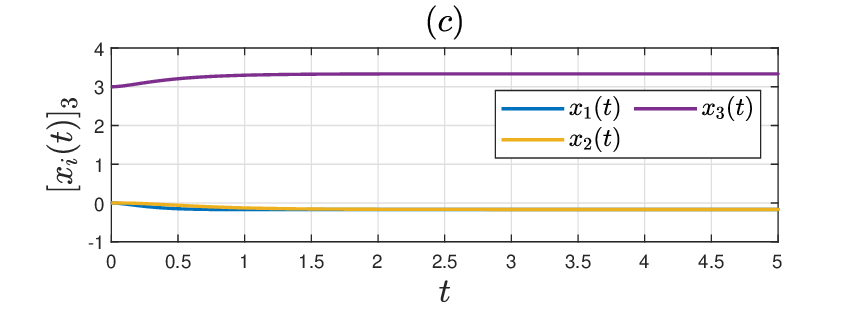}
\par\end{centering}
\centering{}\caption{\label{fig4: 3 vertex trajectory}(a) The trajectories of the first
state component. (b) The trajectories of the second state component.
(c) The trajectories of the third state component.}
\end{figure}
\end{example}
\begin{cor}
\label{cor: tree sufficient and necessary}Consider the matrix-weighted
network \eqref{eq:overall} on a tree network $\mathcal{G}=(\mathcal{V},\mathcal{E},A)$,
then
\end{cor}
\begin{enumerate}
\item the matrix-weighted network \eqref{eq:overall} achieves subspace
consensus on $\mathbb{V}\subseteq\mathbb{R}^{d}$ if and only if $\mathcal{G}$
is a $\mathbb{V}$-tree;
\item the matrix-weighted network \eqref{eq:overall} achieves subspace
consensus on $\mathbb{V}\subseteq\mathbb{R}^{d}$ if and only if $\mathcal{G}$
is $\mathbb{V}$-connected;
\item the matrix-weighted network \eqref{eq:overall} achieves subspace
consensus on $\mathbb{V}\subseteq\mathbb{R}^{d}$ if and only if 
\begin{equation}
\mathbb{V}\perp(\underset{(i,j)\in\mathcal{E}(S,\bar{S})}{\bigcap}\text{{\bf null}}(A_{ij})),\:\forall S\subseteq\mathcal{V}.\label{eq:cut}
\end{equation}
\end{enumerate}
\begin{proof}
To prove that the above three conditions are both sufficient and necessary
for achieving subspace consensus on $\mathbb{V}$, it suffices to
show that the following three statements hold.

We first show that if $\mathcal{G}$ is a $\mathbb{V}$-tree, then
\eqref{eq:cut} holds. According to the definition of $\mathbb{V}$-tree,
one has $\mathbb{V}\subseteq\text{{\bf row}}(A_{ij}),\forall(i,j)\in\mathcal{E}.$
That is, $\mathbb{V}\perp\text{{\bf null}}(A_{ij}),\forall(i,j)\in\mathcal{E}.$
Thus, regardless of how we partition the nodes, \eqref{eq:cut} holds.
Conversely, we show that if \eqref{eq:cut} holds, then $\mathcal{G}$
is a $\mathbb{V}$-tree. One may designate any edge as a cut edge,
which induces that $\mathbb{V}\perp\text{{\bf null}}(A_{ij}),\forall(i,j)\in\mathcal{E}.$
Finally, we show that if $\mathcal{G}$ is a $\mathbb{V}$-tree, then
it is $\mathbb{V}$-connected. if $\mathcal{G}$ is a $\mathbb{V}$-tree,
then one has $\mathbb{V}\perp\text{{\bf null}}(A_{ij}),\forall(i,j)\in\mathcal{E}.$
Thus, the null space of any path in $\mathcal{G}$ is orthogonal to
$\mathbb{V},$ which implies $\mathcal{G}$ is $\mathbb{V}$-connected.

Combining the above three statements with Theorems \ref{thm: sufficient condition},
\ref{thm: sufficient condition V connected} and \ref{thm: necessary condition cut},
one can conclude that the three conditions in Corollary \ref{cor: tree sufficient and necessary}
are all sufficient and necessary conditions for achieving subspace
consensus on $\mathbb{V}.$
\end{proof}

\subsection{Invariance of Cluster Center}

In matrix-weighted networks, network connectivity is not equivalent
to reaching consensus (unlike in scalar-weighted networks). The cluster
consensus on agents' states is ubiquitous even though the underlying
network is connected. Note from Example \ref{exa1}, the agents may
form clusters according to their steady states; namely, agents with
the same steady state form a cluster. Formally, we shall employ a
node partition to characterize the agent cluster. A node partition
of $\mathcal{G}$ is a set of $s$ disjoint subsets $\left\{ \mathcal{C}_{l}\right\} {}^{s}_{l=1}$
such that $\bigcup^{s}_{l=1}\mathcal{C}_{l}=\mathcal{V}$ and $\mathcal{C}_{l_{1}}\cap\mathcal{C}_{l_{2}}=\emptyset$
for any distinct $l_{1},l_{2}\in\underline{s}$. We use $\left|\mathcal{C}_{l}\right|$
to represent the number of agents in cluster $\mathcal{C}_{l}$. Denote
$\bar{\boldsymbol{x}}_{\mathcal{C}_{l}}(t)=\frac{1}{\left|\mathcal{C}_{l}\right|}\sum_{i\in\mathcal{C}_{l}}\boldsymbol{x}_{i}(t)$
as the center of $\mathcal{C}_{l}$. Next, we shall discuss the case
where the row space of each positive semi-definite matrix-valued weight
is equal to a given subspace $\mathbb{V}\subseteq\mathbb{R}^{d},$
which is dictated in the following assumption.

\textbf{Assumption 1.} The row space of all positive semi-definite
edges in the matrix-weighted network \eqref{eq:overall} are the same,
namely, there exists a subspace $\mathbb{V}\subseteq\mathbb{R}^{d}$
such that $\text{{\bf row}}(A_{ij})=\mathbb{V}$ for all $\left\{ (i,j)\in\mathcal{E}\thinspace|\thinspace A_{ij}\succeq0\right\} $.
\begin{thm}
\label{thm:center invariance} If Assumption 1 holds, the matrix-weighted
network \eqref{eq:overall} achieves subspace consensus on $\mathbb{V}.$
Moreover, the derivatives of the centers of all clusters belong to
$\mathbb{V},$ i.e., $\dot{\bar{\boldsymbol{x}}}_{\mathcal{C}_{l}}(t)\in\mathbb{V},\:\forall l\in\underline{s},t\geq0,$
where $s$ is the number of clusters.
\end{thm}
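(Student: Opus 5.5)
The first claim should follow from Theorem \ref{thm: sufficient condition}, and the second from a direct computation of $\dot{\bar{\boldsymbol{x}}}_{\mathcal{C}_{l}}$ in which intra-cluster terms cancel. Throughout I use that the network is undirected with symmetric weights, $A_{ij}=A_{ji}\succeq0$, so that $\text{{\bf range}}(A_{ij})=\text{{\bf row}}(A_{ij})$.

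For subspace consensus, note that under Assumption 1 every edge satisfies $\mathbb{V}\subseteq\text{{\bf row}}(A_{ij})$. A positive definite edge has $\text{{\bf row}}(A_{ij})=\mathbb{R}^{d}\supseteq\mathbb{V}$. A positive semi-definite edge has $\text{{\bf row}}(A_{ij})=\mathbb{V}$ by assumption. The statement implicitly requires the underlying graph to be connected; without this, two components could hold arbitrary constant states and subspace consensus would fail. I will make this standing hypothesis explicit. Given connectivity, any spanning tree of $\mathcal{G}$ is automatically a $\mathbb{V}$-spanning tree in the sense of Definition \ref{def: V-spanning tree}, and Theorem \ref{thm: sufficient condition} yields subspace consensus on $\mathbb{V}$.

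For the cluster centers, I take the clusters $\{\mathcal{C}_{l}\}$ to be the classes of agents with equal steady state, as in the discussion after Example \ref{exa1}. The key structural fact is that every positive definite edge lies inside a single cluster. Indeed, the steady state $\boldsymbol{x}^{*}$ lies in $\text{{\bf null}}(L)$ (proof of Theorem \ref{thm:algebraic condition}), so by Lemma \ref{lem:null-space} we have $\boldsymbol{x}^{*}_{i}-\boldsymbol{x}^{*}_{j}\in\text{{\bf null}}(A_{ij})=\{0\}$ for any positive definite edge $(i,j)$. Hence every inter-cluster edge is positive semi-definite with row space $\mathbb{V}$.

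Now compute
\[
\dot{\bar{\boldsymbol{x}}}_{\mathcal{C}_{l}}=\frac{1}{|\mathcal{C}_{l}|}\sum_{i\in\mathcal{C}_{l}}\sum_{j\in\mathcal{N}_{i}}A_{ij}(\boldsymbol{x}_{j}-\boldsymbol{x}_{i})
\]
and split the inner sum according to whether $j\in\mathcal{C}_{l}$. Terms with $i,j\in\mathcal{C}_{l}$ occur in pairs $A_{ij}(\boldsymbol{x}_{j}-\boldsymbol{x}_{i})+A_{ji}(\boldsymbol{x}_{i}-\boldsymbol{x}_{j})=0$ by symmetry of the weights, so they cancel. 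Each remaining term has $j\notin\mathcal{C}_{l}$, so $(i,j)$ is a positive semi-definite edge. Then $A_{ij}(\boldsymbol{x}_{j}-\boldsymbol{x}_{i})\in\text{{\bf range}}(A_{ij})=\text{{\bf row}}(A_{ij})=\mathbb{V}$ for every $t\geq0$. Since $\mathbb{V}$ is a subspace, the sum lies in $\mathbb{V}$, which gives $\dot{\bar{\boldsymbol{x}}}_{\mathcal{C}_{l}}(t)\in\mathbb{V}$.

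The main point needing care is the claim that positive definite edges never cross clusters, since this is the only place the cluster definition enters. An arbitrary partition would not work: a crossing positive definite edge contributes a term in $\mathbb{R}^{d}$ rather than in $\mathbb{V}$. I would therefore state precisely that clusters are steady-state classes, or more generally any partition in which every positive definite edge is internal, and invoke Lemma \ref{lem:null-space} as above. A secondary subtlety is the symmetry $A_{ij}=A_{ji}$, which the intra-cluster cancellation relies on and which I would state explicitly for undirected networks.
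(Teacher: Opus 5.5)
Your proposal is correct and follows the same route as the paper. Assumption~1 makes every spanning tree a $\mathbb{V}$-spanning tree, so Theorem~\ref{thm: sufficient condition} gives subspace consensus; the cluster-center derivative then reduces to a sum over cut-set edges, each term lying in $\text{{\bf range}}(A_{ij})=\text{{\bf row}}(A_{ij})=\mathbb{V}$. The paper's proof silently relies on the facts you make explicit (connectivity of $\mathcal{G}$, symmetry $A_{ij}=A_{ji}$ for the intra-cluster cancellation, and positive definite edges never crossing steady-state clusters, so every cut-set edge is semi-definite with row space $\mathbb{V}$), so your version is the same argument with those steps filled in.
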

\begin{proof}
Assumption 1 implies that the matrix-weighted network \eqref{eq:overall}
has a $\mathbb{V}$-spanning tree and therefore achieves subspace
consensus on $\mathbb{V}.$ Furthermore, if Assumption 1 is satisfied.
Let $S$ denote the node subset of $\mathcal{C}_{l}$ and $\mathcal{E}(S,\bar{S})$
denote the cut-set induced by the cut $C=(S,\bar{S}),$ where $\bar{S}=\mathcal{V}\setminus S.$
Then, one has $\dot{\bar{\boldsymbol{x}}}_{\mathcal{C}_{l}}(t)=\frac{1}{|\mathcal{C}_{l}|}{\displaystyle \mathop{\sum}_{(i,j)\in\mathcal{E}(S,\bar{S})}A_{ij}}\left(\boldsymbol{x}_{j}(t)-\boldsymbol{x}_{i}(t)\right),$
where $\text{{\bf row}}(A_{ij})=\mathbb{V}.$ And ${\displaystyle A_{ij}}\left(\boldsymbol{x}_{j}(t)-\boldsymbol{x}_{i}(t)\right)\in\text{{\bf range}}(A_{ij})=\text{{\bf range}}(A^{T}_{ij})=\text{{\bf row}}(A_{ij})=\mathbb{V}$
for $\forall(i,j)\in\mathcal{E}(S,\bar{S})$. Therefore, $\dot{\bar{\boldsymbol{x}}}_{\mathcal{C}_{l}}(t)\in\mathbb{V},\:\forall l\in\underline{s},t\geq0.$
\end{proof}
Moreover, for scalar-weighted networks, Assumption 1 naturally holds,
namely, $\text{{\bf row}}(A_{ij})=\mathbb{R}^{d}$ for all $(i,j)\in\mathcal{E}$.
A well-known result in undirected connected scalar-weighted networks
is that the center of agents is time-invariant \cite{olfati2004consensus}.
In fact, Theorem \ref{thm:center invariance} implies that the trajectory
of the center of each cluster is perpendicular to $\mathbb{V^{\perp}},$
i.e., $\text{P}_{\mathbb{V^{\perp}}}(\bar{\boldsymbol{x}}_{\mathcal{C}_{l}}(t_{1}))=\text{P}_{\mathbb{V^{\perp}}}(\bar{\boldsymbol{x}}_{\mathcal{C}_{l}}(t_{2}))$
for $\forall t_{1},t_{2}>0$. If $\mathbb{V}$ is a one-dimensional
subspace spanned by a single vector, then the trajectory of each cluster
center forms a line segment, as shown in the example below. This result
extends our understanding of the center of agents from a higher-dimensional
perspective.
\begin{example}
\label{exa:center invariance}Consider a $6$-node matrix-weighted
network \eqref{eq:overall} on a $d=2$ dimensional state space. The
inter-agent interaction pattern is represented by $\mathcal{G}$,
as shown in Figure \ref{fig:The-network-structure of exm for def}.
It is shown that $\mathcal{G}$ is partitioned into three clusters
according to the agents' steady states, namely, $\mathcal{C}_{1}=\left\{ 1,2,3\right\} $,
$\mathcal{C}_{2}=\left\{ 4,5\right\} $ and $\mathcal{C}_{3}=\left\{ 6\right\} .$
Let $\mathbb{V}=\text{{\bf span}}\left\{ \boldsymbol{e}_{1}+\boldsymbol{e}_{2}\right\} .$
The coupling matrices are set to satisfy Assumption 1. The trajectories
of agents' states are shown in Figure \ref{fig: 6 vertex trajectory center}.
The solid lines in purple, yellow, and green denote the trajectories
of agents belonging to $\mathcal{C}_{1},\mathcal{C}_{2}$ and $\mathcal{C}_{3},$
respectively. The solid lines in purple, yellow, and green denote
the average state trajectory of $\mathcal{C}_{1},\mathcal{C}_{2}$
and $\mathcal{C}_{3},$ respectively. The red triangles and red stars
denote the starting points and the ending points of the trajectories,
respectively. The red dashed line and the blue dashed line represent
the subspace $\mathbb{V}$ and its orthogonal complement $\mathbb{V}^{\bot}$,
respectively. The average state trajectories of $\mathcal{C}_{1},\mathcal{C}_{2}$
and $\mathcal{C}_{3}$ are perpendicular to $\mathbb{V}^{\bot},$
which implies their derivatives belong to $\mathbb{V}.$ This demonstrates
Theorem \ref{thm:center invariance}. 
\begin{figure}[tbh]
\begin{centering}
\includegraphics[width=8cm]{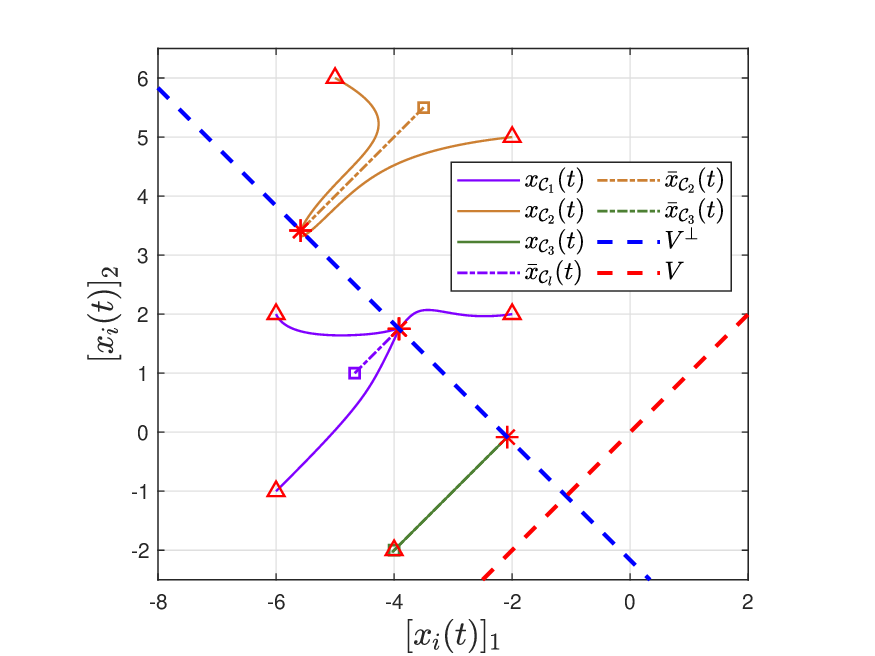}
\par\end{centering}
\centering{}\caption{\label{fig: 6 vertex trajectory center}The state trajectories of
\textcolor{black}{matrix-weighted network \eqref{eq:overall}} in
Example \ref{exa:center invariance}. }
\end{figure}
\end{example}

\section{Conclusion Remarks\label{sec:Conclusion} }

This paper introduces the concept of subspace consensus for matrix-weighted
multi-agent networks. The subspace consensus problem extends the traditional
consensus problem by examining the influences of positive semi-definite
edges. In this setting, necessary and/or sufficient conditions are
provided from both algebraic and topological perspectives. Specifically,
we derived a necessary and sufficient algebraic condition, presented
sufficient conditions based on $\mathbb{V}$-connectivity and $\mathbb{V}$-spanning
trees, and established a necessary condition in terms of graph cuts.
For tree networks, we further obtained several necessary and sufficient
conditions. Numerical examples were provided to validate the theoretical
findings.

\bibliographystyle{plain}
\bibliography{mybib,\string"mybib 2\string",\string"mybib 3\string",mybib-matrix-scaled,mybib-dynamic-event-trigger}

\end{document}